\def\thm@space@setup{%
  \thm@preskip=\parskip \thm@postskip=0pt
}
\numberwithin{equation}{section}
\newtheorem{de}{Definition}[section]
\newtheorem{theo}[de]{Theorem}
\newtheorem{pro}[de]{Proposition}
\newtheorem{lemm}[de]{Lemma}
\theoremstyle{remark}
\theoremstyle{plain}
\newcommand{\mbP}{\mathbb{P}}
\newcommand{\mbC}{\mathbb{C}}
\newcommand{\mbR}{\mathbb{R}}
\newcommand{\cA}{\mathcal A}
\newcommand{\cB}{\mathcal B}
\newcommand{\cP}{\mathcal P}
\newcommand{\cQ}{\mathcal Q}
\newcommand{\zti}{\widetilde{z}}
\title{\textbf{A three-dimensional generalization of QRT maps}}
\author{Jaume Alonso \and Yuri B.\ Suris\and Kangning Wei }
\date{\small Institut für Mathematik, MA 7-1\\ Technische Universität Berlin, Str.\ des 17.\ Juni, 10623 Berlin, Germany\\
E-mail: alonso@math.tu-berlin.de, suris@math.tu-berlin.de, wei@math.tu-berlin.de}
\begin{document}

\maketitle

\begin{abstract}
\noindent We propose a geometric construction of three-dimensional birational maps that preserve two pencils of quadrics. The maps act as compositions of involutions, which, in turn, act along the straight line generators of the quadrics of the first pencil and are defined by the intersections with quadrics of the second pencil. On each quadric of the first pencil, the maps act as two-dimensional QRT maps.\\

\noindent While these maps are of a pretty high degree in general, we find geometric conditions which guarantee that the degree is reduced to 3. The resulting degree 3 maps are illustrated by two known and two novel Kahan-type discretizations of three-dimensional Nambu systems, including the Euler top and the Zhukovski-Volterra gyrostat with two non-vanishing components of the gyrostatic momentum.
\end{abstract}

\section{Introduction}

The theory of integrable systems has vitally important interrelations with algebraic geometry and other branches of geometry, like projective and differential geometry. Some celebrated examples of discrete integrable systems are the so-called QRT maps, introduced in 1988 by Quispel, Roberts and Thompson \cite{QRT1988, QRT1989} and which enjoy a rich geometric structure. Originally formulated in terms of pencils of biquadratic curves, their relation to rational elliptic surfaces has been clarified in \cite{Veselov1991}, \cite{Tsuda2004}, and a monographic exposition of their multiple interrelations with algebraic geometry can be found in \cite{Duistermaat2010}.

Further examples of integrable planar maps with integrals of higher degrees were introduced in \cite{HKY_2001}, \cite{Kimura_et_al_2002} (sometimes they are quoted as HKY maps, after Hirota, Kimura, and Yahagi). Additional examples and constructions can be found in \cite{Tsuda_et_al_2007}, \cite{Tsuda_et_al_2009}, \cite{Kassotakis_Joshi_2010}. A certain classification of integrable planar maps related to rational elliptic surfaces was given 
in \cite{Carstea_Takenawa_2012}. It can be seen as a refinement of Sakai's classification of discrete Painlev\'e equations \cite{Sakai2001}. 

An extensive source of integrable birational maps in arbitrary dimension is provided by the so-called \emph{Kahan discretization method} of quadratic vector fields. This method, when applied to a system of ordinary differential equations with a quadratic vector field
\begin{equation}\label{defKH}
\dot{z}_k=  \sum_{i,j=1}^n a^{(k)}_{ij} z_i z_j + \sum_{i=1}^n b^{(k)}_i z_i + c^{(k)},\quad k=1,\ldots,n,
\end{equation}
results in a one-parameter family of birational maps $\widetilde z=f(z,\varepsilon)$ on $\mbC^n$, given by the following system of equations:
\begin{equation}\label{Kah1}
\dfrac{\widetilde{z}_k - z_k}{\varepsilon} = \dfrac{1}{2}\sum_{i,j=1}^n a^{(k)}_{ij}  ( \widetilde{z}_iz_j + z_i \widetilde{z}_j)+ \frac{1}{2}\sum_{i=1}^n b^{(k)}_i (\widetilde{z}_i + z_i) + c^{(k)}, \quad k=1,\ldots,n.
\end{equation} 
Equations \eqref{Kah1} are bilinear with respect to $(z_1,\ldots,z_n)$ and $(\widetilde z_1,\ldots,\widetilde z_n)$, so by Cramer's rule the map $f$ is rational of degree $n$ and the same is true for the map $f^{-1}$. Moreover, interchanging $z\leftrightarrow \widetilde z$ and $\varepsilon\leftrightarrow-\varepsilon$, we see that $f^{-1}(z,\varepsilon)=f(z,-\varepsilon)$. 

The Kahan discretization method was introduced in \cite{Kahan1993} as an unconventional numerical method with unexpectedly good stability properties. It reappeared in two seminal papers by Hirota and Kimura \cite{HirotaKimura2000, KimuraHirota2000}, who were apparently unaware of Kahan's work. In \cite{PPS2009, PPS2011}, it was established that this method tends to preserve integrability when applied to integrable systems. It was proposed to call this method ``Hirota-Kimura discretization'' in the context of integrable systems, but it seems that the name ``Kahan discretization'' remains more established. Remarkable integrability properties of this method were investigated in a number of further papers, including \cite{Celledoni1, Celledoni2, Celledoni3} and \cite{PS2019, PSS2019, PSWZ2021, SST2021}. 
 
The Kahan discretizations are better considered as birational maps $f:\mbP^n\dasharrow\mbP^n$ by regarding $(z_1,\ldots,z_n)$ as inhomogeneous coordinates on the affine part $\mbC^n\subset\mbP^n$, with $x=[z_1:\ldots:z_n:1]$. This type of maps produced by a set of $n$ bilinear relations between homogeneous coordinates of $x$ and $\widetilde x$ is well-known in the classical literature on birational (Cremona) maps. For instance, Cayley \cite{Cayley} calls such maps \emph{lineo-linear}, while Hudson \cite{Hudson1927} calls them \emph{bilinear}. In a more modern terminology \cite{Pan1997, Pan1999}, these are \emph{determinantal maps}, which means that the homogeneous coordinates of $f(x)$ are the $n+1$ minors of the maximal order $n$ of a $(n+1)\times n$ matrix whose entries are linear forms in $x$. In any case, this construction was classically used to produce birational maps of type $T_{n-n}$ (of bidegree $(n,n)$ in modern terms). 

In the classical literature on Cremona transformations, the idea of dynamics was alien, so that virtually no results on \emph{integrability} of birational maps of bilinear type can be found there. The study of integrability of Kahan discretizations can be considered as filling this gap, and constitutes one of our main motivations in the present paper. Our main guiding principle is the idea that geometry is underlying remarkable dynamical properties (compare with the monograph \cite{DDGbook} which puts discrete differential geometry to a basis of the theory of discrete integrable systems). Thus, we derive integrable maps via geometric constructions.

More precisely, we propose here a proper three-dimensional generalization of the class of QRT maps. It should be mentioned that several multidimensional generalizations of QRT map are available in the literature (see, e.g., \cite{Capel_Sahadevan_2001}, \cite{Tsuda2004}), but all of them try to reproduce the analytical mechanism of integrability on the level of formulas, which makes them miss --in our opinion-- the most interesting cases. Our idea is, first, to reinterpret QRT maps as maps on a quadric in $\mbP^3$ (which is done in Section \ref{sect QRT on one quadric}, after we recall the original QRT construction in $\mbP^1\times \mbP^1$ in Section \ref{sect QRT}), and then, second, to extend these to maps on a pencil of quadrics in $\mbP^3$, governed by the second pencil of quadrics (Section \ref{sec:construction}). From the point of view of integrable systems, the most appealing examples are the simplest ones, which here means the examples with the least degree. We find two geometric constructions which guarantee that the resulting maps are of degree 3, and actually of the bilinear (or determinantal) type. These constructions are given in Sections \ref{ss:dropdegree} and \ref{section: construction 2}, respectively. In the first construction, the both pencils of quadrics are of a special type (separable pencils). In the second construction, one of the pencils is separable and shares one common quadric with the second pencil, which can be arbitrary. Besides that, it is remarkable that our constructions include several previously known integrable systems and allow us to discover new ones. The previously known examples include the Kahan discretization of the Euler top (Section \ref{ss:dET}) and of the Zhukovski-Volterra gyrostat with one non-vanishing component of the gyrostatic momentum (Section \ref{sect: dZV beta1}). The new examples solve the problem of integrability of the Kahan discretization of a more general Zhukovski-Volterra gyrostat, which was open since it was first posed in \cite{PPS2011}, see Sections \ref{ss: dZV special} and \ref{ss: dZV general}.

\section{QRT maps on $\mbP^1\times\mbP^1$}
\label{sect QRT}

To quickly introduce QRT maps, consider a \emph{pencil of biquadratic curves}
$$
\cA_\lambda=\Big\{(x,y)\in \mbC^2: A_\lambda(x,y):=A_0(x,y)-\lambda A_\infty(x,y)=0\Big\},
$$
where $A_0,A_\infty$ are two polynomials of bidegree (2,2). The \emph{base set} $\mathcal B$ of the pencil is defined as the set of points through which all curves of the pencil pass or, equivalently, as the intersection $\{A_0(x,y)=0\}\cap\{A_\infty(x,y)=0\}$. 
Through any point $(x_0,y_0)\not\in\mathcal B$, there passes exactly one curve of the pencil, defined by $\lambda=\lambda(x_0,y_0)=A_0(x_0,y_0)/A_\infty(x_0,y_0)$. 

It is often better to consider this pencil in a compactification of $\mbC^2$, which could be chosen either as $\mbP^2$ or as $\mbP^1\times \mbP^1$. These two choices lead to a somewhat different geometry. 
\begin{itemize}
\item In $\mbP^1\times \mbP^1$, $\mathcal B$ consists of eight base points;
\item In $\mbP^2$, $\mathcal B$ consists of eight simple base points and two double base points $[1:0:0]$ and $[0:1:0]$ at infinity.
\end{itemize}
In this paper, we restrict ourselves to the case $\mbP^1\times \mbP^1$. 

One defines the \emph{horizontal switch} $i_1$ and the \emph{vertical switch} $i_2$ as follows. For a given point $(x_0,y_0)\in\mbP^1\times \mbP^1 \backslash \mathcal{B}$, determine $\lambda=\lambda(x_0,y_0)$ as above. Then the horizontal line $\{y=y_0\}$ intersects $A_\lambda$ at exactly one further point $(x_1,y_0)$ which is defined to be $i_1(x_0,y_0)$; similarly, the vertical line $\{x=x_0\}$ intersects $A_\lambda$ at exactly one further point $(x_0,y_1)$ which is defined to be $i_2(x_0,y_0)$. The QRT map is defined as
$$
g=i_2\circ i_1.
$$
Each of the maps $i_1$, $i_2$ is a birational involution on $\mbP^1\times \mbP^1$ with indeterminacy set $\mathcal B$. Likewise, the QRT map $f$ is a (dynamically nontrivial) birational map on $\mbP^1\times \mbP^1$, having $\lambda(x,y)=A_0(x,y)/A_\infty(x,y)$ as an integral of motion. On each fiber 
$$
\cA_{\lambda_0}=\Big\{(x,y): \lambda(x,y)=\lambda_0\Big\},
$$
which is generically an elliptic curve, $g$ acts as a shift with respect to the corresponding addition law.

In the important \emph{symmetric case}, where the polynomials $A_0$ and $A_\infty$ are symmetric under the reflection
$$
\sigma(x,y)=(y,x),
$$
one can define the so-called \emph{QRT root}, such that $g=f\circ f$, by the formula
$$
f=\sigma\circ i_1=i_2\circ\sigma.
$$

\section{QRT maps as maps on a quadric}
\label{sect QRT on one quadric}

We propose a map defined by the following geometric data:
\begin{itemize}
\item a non-degenerate quadric $\cP$ in $\mbP^3$,
\item and a pencil of quadrics $\cQ_\lambda$ in $\mbP^3$.
\end{itemize}
We now define two involutive maps $i_1,i_2$ on $\cP$ as follows. The quadric $\cP$ admits two rulings such that any two lines of one ruling are skew and any line of one ruling intersects any line of the second ruling. Through each point $X\in\cP$ there pass two straight lines, one of each of the two rulings, let us call them $\ell_1(X)$ and $\ell_2(X)$.  The quadric $\cP$ can be considered as a fibration by the quartic intersection curves $\cA_\lambda=\cP\cap\cQ_\lambda$. For a given point $X\in\cP$, let $\lambda=\lambda(X)$ be defined as the value of the pencil parameter for which $X\in \cA_\lambda$. Denote by $i_1(X)$, $i_2(X)$ the second intersection point of $\ell_1(X)$ with $\cQ_\lambda$, resp. the second intersection point of $\ell_2(X)$ with $\cQ_\lambda$.

It is easy to see that these maps are isomorphic to the corresponding QRT switches. Indeed, any non-degenerate quadric $\cP$ in $\mbP^3$ is written in suitable coordinates as 
$$
\cP=\Big\{[X_1:X_2:X_3:X_4]: X_1X_2=X_3X_4\Big\}\subset \mbP^3.
$$
It is isomorphic to $\mbP^1\times\mbP^1$, via
$$
\mbP^1\times\mbP^1\ni (x,y)=\big([x_1:x_0],[y_1:y_0]\big)\;\;\mapsto \;\;[x_0y_0:x_1y_1:x_1y_0:x_0y_1]\in \cP.
$$
In other words, in affine coordinates on $\mbP^1\times\mbP^1$:
$$
X_2:X_1=xy, \quad X_3:X_1=x, \quad X_4:X_1=y.
$$
Thus, the intersection of an arbitrary quadric $\cQ\in\mbP^3$ of equation
\begin{align*}
& a_{11}X_1^2+a_{22}X_2^2+a_{33}X_3^2+a_{44}X_4^2\\
& +a_{12}X_1X_2+a_{13}X_1X_3+a_{14}X_1X_4 
   +a_{23}X_2X_3+a_{24}X_2X_4+a_{34}X_3X_4 =0
\end{align*}
with $\cP$ corresponds to a biquadratic curve in $\mbP^1\times \mbP^1$ of equation
$$
a_{11}+(a_{12}+a_{34})xy+a_{13}x+a_{14}y 
+a_{22}x^2y^2+a_{23}x^2y+a_{24}xy^2 
+a_{33}x^2+a_{44}y^2 =0.
$$
Therefore, the fibration of $\cP$ by curves $\cA_\lambda=\cP\cap\cQ_\lambda$ corresponds to a fibration of $\mbP^1\times \mbP^1$ by a pencil of biquadratic curves.

On the other hand, the generators of $\cP$ through $X=[X_1:X_2:X_3:X_4]\in\cP$ are easily computed (see Lemma \ref{lemma generators}) and are given at a generic point $X$ by formulas \eqref{l1}, \eqref{l2} below. In affine coordinates $(x,y)$ on $\mbP^1\times \mbP^1$, these formulas turn into
$$
\ell_1(x,y)=\big\{(tx,y) : t\in\mbC\big\} \quad {\rm and} \quad \ell_2(x,y)=\big\{(x,ty): t\in\mbC\big\}
$$
respectively (the first formula holds true if $x\neq 0$, the second one if $y\neq 0$). These are horizontal, resp. vertical lines through $(x,y)$ Therefore, involutions $i_1, i_2$ on $\cP$ along generators of $\cP$ correspond to the horizontal, resp.  vertical switch in $\mbP^1\times\mbP^1$.

Summarizing, we come to the following definition.
\begin{de} \label{def QRT quadric}
For any nondegenerate quadric $\cP$ and a pencil of quadrics $\cQ_\lambda$, the  \emph{QRT map} $g:\cP\to\cP$ is defined as
$$
g=i_2\circ i_1.
$$
If the pencil $\cQ_\lambda$ is symmetric with respect to 
$$
\sigma(X_1,X_2,X_3,X_4)=(X_1,X_2,X_4,X_3),
$$
then we define the \emph{QRT root} $f:\cP\to \cP$, such that $g=f\circ f$, by the formula
$$
f=\sigma\circ i_1=i_2\circ\sigma.
$$
\end{de}

\section{3D generalization of the QRT construction}
\label{sec:construction}

We now generalise the QRT construction to the three-dimensional space.

\begin{de} 
Given two pencils of quadrics $\cP_\mu$ and $\cQ_\lambda$, we say that a birational map $g:\mbP^3\dasharrow \mbP^3$ is a 3D generalization of QRT if it leaves all quadrics of both pencils invariant, and induces on each $\cP_\mu$ a QRT map, according to Definition \ref{def QRT quadric}.
\end{de}
However, this definition is not that easy to realize. The main difficulty is to ensure that $g$ is birational. Indeed, for a generic pencil $\cP_\mu$, generators of the quadrics of the pencil are not rational functions in $\mbP^3$.

\textbf{Counterexample.} Let $\cP_\mu$ be the pencil
$$
\cP_\mu=\big\{ X_1^2+X_2^2+X_3^2-\mu X_4^2=0\big\}.
$$
Let us consider an affine space with inhomogeneous coordinates $(X_1,X_2,X_3)$ by requiring $X_4=1$, so that 
$$
X_1^2+X_2^2+X_3^2=\mu.
$$ 
We look for the straight line generators of the quadric $\cP_\mu$ through the point $X$, i.e., we look for vectors $(V_1,V_2,V_3)$ such that $(X_1 + t V_1, X_2 + tV_2 , X_3 + tV_3)$ belongs to the quadric for all $t$. This means, that
$$ (X_1 + tV_1)^2+(X_2 + tV_2)^2+(X_3 + tV_3)^2 =\mu,$$ which is a quadratic equation in $t$. Equating the coefficients of this equation to 0, we obtain
\begin{equation}
\begin{cases}
X_1 V_1 + X_2 V_2 + X_3 V_3 =0 ,\\
V_1^2+ V_2^2+V_3^2=0.
\end{cases}
\end{equation}
From this, we get directions of the two generators through $[X_1:X_2:X_3:1]$:
$$
[V_1:V_2:V_3]=\left[\frac{-X_1X_2\pm i\sqrt{\mu}X_3}{X_1^2+X_3^2}:1:\frac{-X_2X_3\mp i\sqrt{\mu}X_1}{X_1^2+X_3^2}\right].
$$
Thus, for any fixed $\mu$, we get the directions of the straight line generators as rational functions on $\mbP^3$, but for the pencil as a whole, these expressions depend on $\sqrt{X_1^2+X_2^2+X_3^2}$, i.e., are non-rational.

We will not give a complete characterization of pencils for which this dependence is rational, and restrict ourselves in this paper to one particularly interesting case.

\begin{de}
A pencil of quadrics $\cP_\mu$ in $\mbP^3$ is called \emph{separable} if it contains two reducible quadrics, each consisting of two planes, $\cP_0=\Pi_1\cup\Pi_2$ and $\cP_\infty=\Pi_3\cup\Pi_4$, all four planes being distinct.
\end{de}
Choosing those four planes as coordinate planes, we come to the following formula for a separable pencil:
\begin{equation}\label{sep2}
\cP_\mu=\Big\{ X_1X_2-\mu X_3X_4=0\Big\}.
\end{equation}
This pencil can be characterized by having the base set consisting of four lines
\begin{eqnarray}
L_1= \{X_1 = X_3 = 0\}, & \; & L_2= \{X_1 = X_4 =0\}, \label{lines 12}\\
L_3 =\{X_2 = X_4 =0\}, & \; & L_4= \{X_2 = X_3 = 0\}. \label{lines 34}
\end{eqnarray}
Through each point in $\mbP^3$ not belonging to the base set, there passes exactly one quadric of the pencil.   

We now compute the straight line generators of the quadrics $\cP_\mu$.
\begin{lemm}\label{lemma generators}
The straight line generators of $\cP_\mu$ through a generic point $X\in\cP_\mu$ are given by 
\begin{eqnarray}
\ell_1(X) & = & \big\{[X_1:tX_2:tX_3:X_4] : t\in\mbP^1\big\}, \label{l1} \\
\ell_2(X) & = & \big\{[X_1:tX_2:X_3:tX_4]: t\in\mbP^1\big\}. \label{l2}
\end{eqnarray}
Formula \eqref{l1} for $\ell_1(X)$ is well defined unless $X$ belongs to either of the lines $L_2=\{X_1 = X_4=0\}$ or $L_4=\{X_2 = X_3 =0\}$.  Likewise, formula \eqref{l2} for $\ell_2(X)$ is well defined unless $X$ belongs to either of the lines $L_1=\{X_1 = X_3=0\}$ or $L_3=\{X_2 = X_4 =0\}$. \end{lemm}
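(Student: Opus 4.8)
The plan is to prove the lemma by a direct substitution together with the standard count of line generators on a smooth quadric. First I would plug the parametrized point of $\ell_1(X)$ into the defining equation of $\cP_\mu$: for $[X_1:tX_2:tX_3:X_4]$ one finds $X_1(tX_2)-\mu(tX_3)X_4=t(X_1X_2-\mu X_3X_4)=0$, where the last equality uses $X\in\cP_\mu$. The identical computation for $\ell_2(X)=\{[X_1:tX_2:X_3:tX_4]\}$ again factors as $t(X_1X_2-\mu X_3X_4)=0$. Hence both families lie entirely on $\cP_\mu$, and each contains $X$ at $t=1$. Since $t$ enters only linearly, each family is the image of $\mbP^1$ under a linear map, i.e.\ a straight line in $\mbP^3$, as soon as it is non-degenerate (treated below).

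Next I would argue that $\ell_1(X)$ and $\ell_2(X)$ are \emph{all} the generators through $X$. For a smooth point $X$ of a non-degenerate quadric, the intersection of $\cP_\mu$ with its tangent plane $T_X\cP_\mu$ is a plane conic that is singular at $X$, hence a pair of lines through $X$; these two lines are precisely the generators, one from each ruling. It therefore suffices to exhibit two distinct lines on $\cP_\mu$ through $X$, which is exactly what the previous step provides. To see $\ell_1(X)\neq\ell_2(X)$ generically, I note that $\ell_1(X)$ is spanned by its values at $t=0$ and $t=\infty$, namely $[X_1:0:0:X_4]$ and $[0:X_2:X_3:0]$, whereas $\ell_2(X)$ is spanned by $[X_1:0:X_3:0]$ and $[0:X_2:0:X_4]$; a short comparison of these spanning points shows that for generic $X$ the two lines differ (and hence meet only at $X$).

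Finally I would settle the well-definedness conditions, which is the one place needing care. The family $\ell_1(X)$ is an honest line exactly when its two spanning vectors $(X_1,0,0,X_4)$ and $(0,X_2,X_3,0)$ are both non-zero; since they have disjoint supports they are then automatically linearly independent. The first vanishes precisely on $L_2=\{X_1=X_4=0\}$ and the second precisely on $L_4=\{X_2=X_3=0\}$, yielding the stated exceptional set for $\ell_1$. The same analysis applied to $\ell_2(X)$, whose spanning vectors are $(X_1,0,X_3,0)$ and $(0,X_2,0,X_4)$, produces the loci $L_1=\{X_1=X_3=0\}$ and $L_3=\{X_2=X_4=0\}$, matching the statement.

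The main obstacle is not the computation, which is immediate, but the \emph{completeness} claim: one must justify that the two explicit families are genuinely the only generators through $X$, and that they are distinct, rather than merely two lines lying on the surface. Invoking the tangent-plane-section argument above resolves this cleanly and simultaneously explains why a smooth quadric carries exactly two rulings; what remains is then only the elementary degeneracy bookkeeping of the two parametrizations along the four base lines $L_1,\dots,L_4$.
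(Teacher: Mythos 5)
Your proof is correct, but it reaches the conclusion by a different route than the paper. The paper works in the affine chart $X_1=1$ and \emph{derives} the generators: it imposes that a parametrized line $(1,X_2+tV_2,X_3+tV_3,X_4+tV_4)$ lie on $\cP_\mu$ identically in $t$, which forces $V_2-\mu(X_3V_4+X_4V_3)=0$ and $V_3V_4=0$; the two alternatives $V_4=0$ and $V_3=0$ then produce \eqref{l1} and \eqref{l2} and, at the same stroke, show there are no other generators, since these are the only solutions of the system. You instead \emph{verify} that the two given parametrizations lie on $\cP_\mu$ (the factorization $t(X_1X_2-\mu X_3X_4)$) and then argue completeness separately, via the classical fact that the tangent-plane section of a smooth quadric at a smooth point is a pair of lines, so that exhibiting two distinct lines through $X$ suffices. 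Each approach has its merits: the paper's computation packages existence and exhaustiveness into one elementary elimination but is tied to a chart (it does not literally cover $X_1=0$, and the degeneracy loci are only noted in passing), whereas your argument stays projective throughout and your identification of the spanning points $(X_1,0,0,X_4)$, $(0,X_2,X_3,0)$ (resp.\ $(X_1,0,X_3,0)$, $(0,X_2,0,X_4)$) gives the cleanest possible account of exactly where and why the formulas \eqref{l1}, \eqref{l2} break down, matching the loci $L_2,L_4$ (resp.\ $L_1,L_3$) in the statement. The only mild caveat in your version is that the tangent-plane argument presupposes $\cP_\mu$ non-degenerate, i.e.\ $\mu\neq 0,\infty$; this is harmless here since the lemma is about generic data, but it is worth saying explicitly.
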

\begin{proof}
We work in an affine part of $\mbP^3$ given by $X_1=1$, and look for vectors $(V_2,V_3,V_4)$ such that $(1, X_2 + t V_2, X_3 + tV_3 , X_4 + tV_4)$ belongs to the quadric for all $t$. This means that
$$ 
(X_2 + tV_2) - \mu(X_3 + tV_3)(X_4+tV_4) =0,
$$ 
which is a quadratic equation in $t$. Equating coefficients of this equation to 0, we obtain
\begin{equation}
\begin{cases}
V_2-\mu(X_3 V_4 + X_4 V_3)=0 ,\\
V_3 V_4=0.
\end{cases}
\end{equation} This equation gives us two generators:
\begin{itemize}
	\item The first generator $\ell_1(X)$ is obtained by setting $V_4=0$, so that we can take 
$$
[V_2:V_3] = [\mu X_4:1]=[X_2:X_3]
$$ 
(the second expression only holds true provided that $(X_2,X_3)\neq (0,0)$). 	
	\item The second generator $\ell_2(X)$ is obtained by setting $V_3=0$, so that we can take 
$$
[V_2:V_4] = [\mu X_3:1]=[X_2:X_4]
$$ 
(the second expression only holds true provided that $(X_2,X_4)\neq (0,0)$). 
\end{itemize}	
\end{proof}

\textbf{Remark.} One easily sees that each of the lines $L_1$, $L_3$ is a generator of the family $\ell_1$, while each of the lines $L_2$, $L_4$ is a generator of the family $\ell_2$.

We are now in a position to compute QRT involutions defined by a separable pencil $\cP_\mu$ and the second pencil
\begin{equation}\label{pencil2}
\cQ_\lambda := \left\lbrace [X_1 : X_2 : X_3 : X_4] \in \mbP^3 : Q_0 - \lambda Q_\infty =0 \right\rbrace,
\end{equation} 
where $Q_0 = Q_0(X_1,X_2,X_3,X_4)$ and $Q_\infty = Q_\infty(X_1,X_2,X_3,X_4)$ are two linearly independent quadratic forms. 

\begin{pro}
Involutions $i_1$, $i_2$ along generators of the separable pencil \eqref{sep2} defined by the pencil \eqref{pencil2} are given by:
\begin{eqnarray}
i_1: [X_1 : X_2 : X_3 : X_4] &  \mapsto & [X_1T_2  : X_2T_0  : X_3T_0  : X_4T_2], \label{i1}\\
i_2: [X_1 : X_2 : X_3 : X_4]  & \mapsto & [X_1S_2  : X_2S_0  : X_3S_2  : X_4S_0], \label{i2}
\end{eqnarray}
where
\begin{eqnarray}
T_2 & = & Q_0(0,X_2,X_3,0)Q_\infty(X_1,X_2,X_3,X_4)  - Q_0(X_1,X_2,X_3,X_4)Q_\infty(0,X_2,X_3,0), \qquad\quad\label{defT2}\\
T_0 & = & Q_0(X_1,0,0,X_4)Q_\infty(X_1,X_2,X_3,X_4) - Q_0(X_1,X_2,X_3,X_4)Q_\infty(X_1,0,0,X_4), \label{defT0}
\end{eqnarray}
and
\begin{eqnarray}
S_2 & = & Q_0(0,X_2,0,X_4)Q_\infty(X_1,X_2,X_3,X_4) - Q_0(X_1,X_2,X_3,X_4)Q_\infty(0,X_2,0,X_4), \qquad\quad\label{defS2} \\
S_0 & = & Q_0(X_1,0,X_3,0)Q_\infty(X_1,X_2,X_3,X_4) - Q_0(X_1,X_2,X_3,X_4)Q_\infty(X_1,0,X_3,0). \label{defS0}
\end{eqnarray}
\end{pro}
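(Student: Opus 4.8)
The plan is to compute each involution explicitly by parametrizing the relevant generator, intersecting it with the generic quadric $\cQ_\lambda$ of the pencil, and then solving for the second intersection point. I will treat $i_1$ in detail; the computation for $i_2$ is entirely analogous after swapping the roles of $X_3$ and $X_4$, which is exactly the symmetry between \eqref{l1} and \eqref{l2}.

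\textbf{Setting up the intersection.} Fix a generic point $X=[X_1:X_2:X_3:X_4]\in\mbP^3$. First I determine the value $\lambda=\lambda(X)$ of the pencil parameter for which $X\in\cA_\lambda$, namely $X\in\cQ_\lambda$; since $Q_0(X)-\lambda Q_\infty(X)=0$ we have $\lambda=Q_0(X)/Q_\infty(X)$, where I abbreviate $Q_0(X)=Q_0(X_1,X_2,X_3,X_4)$ and similarly for $Q_\infty$. By Lemma \ref{lemma generators}, the first generator through $X$ is $\ell_1(X)=\big\{[X_1:tX_2:tX_3:X_4]:t\in\mbP^1\big\}$, which passes through $X$ at $t=1$. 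The next step is to substitute the parametrization $P(t)=[X_1:tX_2:tX_3:X_4]$ into the equation $Q_0-\lambda Q_\infty=0$ of $\cQ_\lambda$ and to analyze the resulting equation in $t$.

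\textbf{Solving the quadratic in $t$.} Writing out $Q_0(P(t))-\lambda Q_\infty(P(t))=0$, I observe that the terms group according to how many factors of $t$ they carry: the monomials $X_1^2$, $X_1X_4$, $X_4^2$ contribute a $t$-independent constant; the monomials $X_2^2$, $X_2X_3$, $X_3^2$ contribute a coefficient of $t^2$; and the mixed monomials $X_1X_2$, $X_1X_3$, $X_2X_4$, $X_3X_4$ contribute a coefficient of $t$. The key algebraic observation is that the $t^2$-coefficient is precisely $Q_0(0,X_2,X_3,0)-\lambda Q_\infty(0,X_2,X_3,0)$ and the constant term is $Q_0(X_1,0,0,X_4)-\lambda Q_\infty(X_1,0,0,X_4)$, since setting $X_1=X_4=0$ retains exactly the pure-$t^2$ monomials and setting $X_2=X_3=0$ retains exactly the constant monomials. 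Thus the quadratic in $t$ has leading coefficient $a$ and constant term $c$ that I can read off directly, and by Vieta's formula the product of the two roots is $t_1t_2=c/a$. Since one root is $t=1$ (the point $X$ itself), the second root is $t^*=c/a$, giving $i_1(X)=P(t^*)=[aX_1:cX_2:cX_3:aX_4]$ after clearing denominators.

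\textbf{Eliminating $\lambda$ and matching the claimed formulas.} Substituting $\lambda=Q_0(X)/Q_\infty(X)$ into $a=Q_0(0,X_2,X_3,0)-\lambda Q_\infty(0,X_2,X_3,0)$ and multiplying through by $Q_\infty(X)$ converts $a$ into exactly $T_2$ as defined in \eqref{defT2}, and likewise $c\,Q_\infty(X)=T_0$ as in \eqref{defT0}. Hence $i_1(X)=[X_1T_2:X_2T_0:X_3T_0:X_4T_2]$, which is \eqref{i1}. The analogous computation along $\ell_2(X)=\big\{[X_1:tX_2:X_3:tX_4]\big\}$ yields the coefficient of $t^2$ as $Q_0(0,X_2,0,X_4)-\lambda Q_\infty(0,X_2,0,X_4)$ and the constant term as $Q_0(X_1,0,X_3,0)-\lambda Q_\infty(X_1,0,X_3,0)$, producing $S_2$ and $S_0$ after clearing $\lambda$, and hence \eqref{i2}. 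I expect the main bookkeeping obstacle to be verifying cleanly that the restriction of a general quadratic form to the coordinate subspaces $\{X_1=X_4=0\}$ and $\{X_2=X_3=0\}$ picks out exactly the leading and constant coefficients of the one-parameter substitution; this is where one must be careful that no mixed monomial leaks into the wrong coefficient. One should also check the involution property (that applying the formula twice returns $X$) and note the genericity assumptions inherited from Lemma \ref{lemma generators}, but these are secondary to the core substitution argument.
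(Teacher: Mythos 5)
Your proposal is correct and follows essentially the same route as the paper: parametrize the generator $\ell_1(X)$ as $[X_1:tX_2:tX_3:X_4]$, substitute into $Q_0-\lambda Q_\infty=0$ with $\lambda=Q_0(X)/Q_\infty(X)$, read off the leading and constant coefficients of the resulting quadratic in $t$ as the restrictions of the forms to the coordinate subspaces, and use the known root $t=1$ to get the second root $T_0/T_2$. The only cosmetic difference is that you invoke Vieta explicitly and defer the clearing of $\lambda$ to a separate step, whereas the paper substitutes $\lambda$ first and then factors out $t=1$; the content is identical.
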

\begin{proof}
Since the derivations are similar, we give details for $i_1$ only. Take a generic point $X$ for which the generator $\ell_1(X)$ is given by \eqref{l1}, that is, $X$ not belonging to the lines $\{X_1 = X_4=0\}$ or $\{X_2 = X_3 =0\}$. Suppose also that $X$ does not belong to the base set of the pencil $\cQ_\lambda$. Determine the unique value of $\lambda=\lambda(X)=Q_0(X)/Q_\infty(X)$ such that $X\in\cQ(\lambda)$. Assuming that the line $\ell_1(X)$ does not lie on $\cQ_\lambda$, it will intersect  $\cQ(\lambda)$ at exactly one further point $i_1(X)$. To compute it, we solve for $t$ the quadratic equation
$$ 
Q_0(X_1,tX_2,tX_3,X_4)-\lambda Q_\infty(X_1,tX_2,tX_3,X_4)=0,
$$
\normalsize
 or, by substituting $\lambda=\lambda(X)$,
\begin{align*}
&Q_0(X_1,tX_2,tX_3,X_4)Q_\infty(X_1,X_2,X_3,X_4)\\
&-Q_0(X_1,X_2,X_3,X_4)Q_\infty(X_1,tX_2,tX_3,X_4)=0.
\end{align*}
This is a quadratic equation for $t$ of the form
$$ 
T_2(X_1,X_2,X_3,X_4)t^2 + T_1(X_1,X_2,X_3,X_4)t + T_0(X_1,X_2,X_3,X_4)=0,
$$ 
where $T_i$, $i=0,1,2$ are homogeneous polynomials of degree 4, and $T_2$, $T_0$ are explicitly given by \eqref{defT2}, \eqref{defT0}.
One of the solutions of the quadratic equation is $t=1$, so the other one is $t = T_0/T_2$. This gives \eqref{i1}.
\end{proof}
 
\textbf{Remark.}  If the polynomials $Q_0$, $Q_\infty$ are symmetric with respect to $\sigma: X_3\leftrightarrow X_4$, then $i_2\circ\sigma=\sigma\circ i_1$ defines the corresponding QRT root.

\section{Drop of degree: two separable pencils}
\label{ss:dropdegree}

The involutions $i_1$, $i_2$ are in general of degree 5. However, we are interested in examples of a possibly low degree. In this section, we give geometric conditions under which these involutions are of degree 3. In our examples the symmetry $\sigma$ is also present, so that we are able to construct integrable 3D generalizations of the QRT root $f=i_2\circ \sigma=\sigma\circ i_1$ of degree 3. Thus, we restrict our attention to the involution $i_1$ given by \eqref{i1}, and our goal will be to specify conditions under which the quartic polynomials $T_2$, $T_0$ have  a common factor of degree 2:
\begin{equation}\label{T0 T2 fact}
 T_0 = AB_0, \qquad T_2 = AB_2,
 \end{equation}
 where all three polynomials $A$, $B_0$, and $B_2$ are of degree 2, so that the involution $i_1$ is of degree 3. 

We will use the following notation for quadrics in $\mbP^3$:
 $$
 \cA=\{A=0\}, \quad \cB_0=\{B_0=0\}, \quad \cB_2=\{B_2=0\}.
 $$
Condition \eqref{T0 T2 fact} means that the variety $\{T_2=0\}\cap\{T_0=0\}$ consists of the quadric $\cA$ and of the set $\cB_0\cap\cB_2$ which is a part of the indeterminacy set $I(i_1)$. Vanishing of all coefficients of the quadratic equation $T_2t^2+T_1t+T_0=0$ means geometrically that the line $\ell_1(X)$ belongs to the quadric $\cQ_\lambda$ with $\lambda=\lambda(X)$.  Since the distinct lines $\ell_1(X)$ for  $X\in \cA$ do not intersect each other, these lines form one ruling of the quadric $\cA$. We now want to find a direct geometric characterization of this ruling. 

First of all, we observe that, as it follows from \eqref{defT2}, \eqref{defT0}, both polynomials $T_2$ and $T_0$ vanish on the skew lines $L_2=\{X_1=X_4=0\}$ and $L_4=\{X_2=X_3=0\}$. Moreover, $L_2$ is a double line of $\{T_0=0\}$, while $L_4$ is a double line of $\{T_2=0\}$. Two common ways to satisfy these conditions are:
\begin{enumerate}
\item[(1)] $L_2$ is a simple line of $\cB_0$, $L_4$ is a simple line of $\cB_2$, and both are simple lines of $\cA$;
\item[(2)] $L_2$ is a double line of $\cB_0$ (so that $L_0$ belongs to $\cA$), and $L_4$ is a double line of $\cB_2$ (so that $L_2$ belongs to $\cA$). In this case $\cB_0,\cB_2$ must be degenerate, while $\cA$ is non-degenerate. 
\end{enumerate}
In both cases $L_2,L_4$ belong to $\cA$. Since each $\ell_1(X)$ intersects $L_2$ and $L_4$ transversally, the latter two lines must belong to the other ruling of the quadric $\cA$.

To proceed, we show how case (1) above can be realized. For this goal, we make an additional assumption that the second pencil of quadrics $\cQ_\lambda$ is separable as well. In other words,
\begin{equation}\label{sep3}
Q_0=U_1 U_2, \qquad Q_\infty= U_3 U_4,
\end{equation} 
where $U_i = U_i(X_1,X_2,X_3,X_4)$, $i=1,\ldots,4$ are four linearly independent linear forms. The base set of the pencil \eqref{sep3} consists of the four pairwise intersecting lines:
\begin{eqnarray}
L_5=\{U_1 = U_3 = 0\}, & \; & L_6=\{U_1 = U_4 =0\}, \label{lines 56}\\
L_7 =\{U_2 = U_4 = 0\}, & \; & L_8= \{U_2 = U_3 =0\}. \label{lines 78}
\end{eqnarray} 

Condition $\ell_1(X)\subset \cQ_{\lambda(X)}$ means $\ell_1(X)$ must also intersect either the pair of lines $L_5,L_7$ or $L_6,L_8$, since these are two pairs of the base lines of the pencil $\cQ_\lambda$ belonging to two complementary rulings of each $\cQ_{\lambda(X)}$. Thus, either the lines $L_6$, $L_8$ or the lines $L_5$, $L_7$ must also belong to the ruling of the quadric $\cA$ complementary to that consisting of $\ell_1(X)$. 

Summarizing, we see that under the condition \eqref{T0 T2 fact} and if the second pencil of quadrics is separable, either the four lines $L_2$, $L_4$, $L_6$, $L_8$ or the four lines $L_2$, $L_4$, $L_5$, $L_7$ must belong to one ruling of the quadric $\cA$. Thus, a necessary condition for a factorization as in \eqref{T0 T2 fact} is that one of these quadruples of pairwise skew lines lie on a common quadric. We now show that this necessary condition is sufficient as well.

\begin{theo} \label{geprop}
Let $U_i(X)$, $i=1,\ldots,4$, be linearly independent linear forms, and let $\cQ_\lambda=\{U_1U_2-\lambda U_3U_4=0\}$ be the corresponding separable pencil of quadrics. Define the lines $L_1,\ldots, L_4$ as in \eqref{lines 12}, \eqref{lines 34}, and the lines $L_5,\ldots,L_8$ as in  \eqref{lines 56}, \eqref{lines 78}. Then both quartic polynomials $T_0,T_2$ defined in \eqref{defT0}, \eqref{defT2} are divisible by $A$, as in \eqref{T0 T2 fact}, if and only if one of the following two conditions is satisfied:
\begin{itemize}

\item[{\rm(a)}] either the four lines $L_2,L_4,L_6,L_8$ are pairwise skew and lie on a common quadric $\{A=0\}$. In this case $\{B_0=0\}$ passes through $L_2, L_5, L_7$, and $\{B_2=0\}$ passes through $L_4,L_5,L_7$;

\item[{\rm(b)}] or the four  lines $L_2,L_4,L_5,L_7$ are pairwise skew and lie on a common quadric $\{A=0\}$. In this case $\{B_0=0\}$ passes through $L_2, L_6, L_8$, and $\{B_2=0\}$ passes through $L_4,L_6,L_8$.

\end{itemize}
In both cases, the involution $i_1$ along the generators of the pencil $\cP_\mu=\{X_1X_2-\mu X_3X_4=0\}$ defined by the intersections with the pencil $\cQ_\lambda=\{U_1U_2-\lambda U_3U_4=0\}$ is given by   
\begin{equation}  \label{i1 red}
i_1: [X_1 : X_2 : X_3 : X_4]  \mapsto [X_1B_2  : X_2B_0  : X_3B_0  : X_4B_2 ],
\end{equation}
and has degree 3.
\end{theo}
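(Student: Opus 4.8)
The plan is to prove the equivalence by establishing sufficiency through a geometric sweeping argument, then reading off $B_0,B_2$ from a multiplicity count, and finally recovering the reduced formula; necessity is essentially contained in the discussion preceding the statement. Since the relabelling $U_3\leftrightarrow U_4$ fixes the pencil $\cQ_\lambda$ but interchanges $L_5\leftrightarrow L_6$ and $L_7\leftrightarrow L_8$, case (b) follows from case (a), so I would treat only (a). First I would record the common zero locus of $T_0,T_2$. From \eqref{defT2}, \eqref{defT0} one reads off that both quartics vanish on $L_2$ and $L_4$, with $L_2$ double for $T_0$ and $L_4$ double for $T_2$ (the restricted forms $Q_0(0,X_2,X_3,0)$, etc., being quadratic in the two surviving variables). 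Moreover, since $Q_0=Q_\infty=0$ along the whole base set of $\cQ_\lambda$, both $T_0$ and $T_2$ vanish simply on each of $L_5,L_6,L_7,L_8$.

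For sufficiency, assume the four pairwise skew lines $L_2,L_4,L_6,L_8$ lie on a common quadric $\cA=\{A=0\}$. Four pairwise skew lines can lie only on a smooth quadric (a cone or a pair of planes cannot contain four skew lines, by a pigeonhole argument), so $A$ is irreducible and $\cA$ carries two rulings, with $L_2,L_4,L_6,L_8$ in one of them, say $R$. The heart of the matter is to show that $T_0,T_2$ vanish on all of $\cA$, which I would do by sweeping $\cA$ with the complementary ruling $R'$. Each line $m\in R'$ meets $L_2$ and $L_4$, and a direct parametrization shows that the transversals to $L_2=\{X_1=X_4=0\}$ and $L_4=\{X_2=X_3=0\}$ are exactly the generators $\ell_1(X)$ of \eqref{l1}; in addition $m$ meets $L_6$ and $L_8$. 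The key lemma is that any line meeting $L_6=\{U_1=U_4=0\}$ and $L_8=\{U_2=U_3=0\}$ lies entirely in a single quadric of the pencil $\cQ_\lambda$: parametrizing $m$ by its intersection points $p_6\in L_6$, $p_8\in L_8$ gives $(Q_0-\lambda Q_\infty)|_{m}=st\,\big[\,U_1(p_8)U_2(p_6)-\lambda\,U_3(p_6)U_4(p_8)\,\big]$, which vanishes identically precisely for one value $\lambda=\nu(m)$. Hence for $X\in m$ we get $\ell_1(X)=m\subset\cQ_{\lambda(X)}$, so the quadratic $T_2t^2+T_1t+T_0$ from the proof of the Proposition vanishes identically in $t$ at $X$; in particular $T_0(X)=T_2(X)=0$. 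As $m$ sweeps $R'$ the point $X$ sweeps $\cA$, whence $T_0,T_2$ vanish on $\cA$ and, $A$ being irreducible, $A\mid T_0$ and $A\mid T_2$, which is \eqref{T0 T2 fact}.

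It remains to identify $B_0=T_0/A$ and $B_2=T_2/A$. The lines $L_2,L_4,L_6,L_8$ are simple lines of $\cA$, while $L_5,L_7\not\subset\cA$: each meets a line of $R$ (e.g.\ $L_5$ meets $L_6$), so could only lie in $R'$, but $R'$-lines lie in a single $\cQ_\nu$ whereas the base lines $L_5,L_7$ lie in all members of the pencil. The multiplicity bookkeeping then forces $\cB_2$ to pass through exactly $L_4$ (restoring the double vanishing of $T_2$ on $L_4$) together with $L_5,L_7$, and $\cB_0$ through exactly $L_2,L_5,L_7$; since three pairwise skew lines determine a unique quadric, this pins down $\cB_0,\cB_2$ as claimed. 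Substituting \eqref{T0 T2 fact} into \eqref{i1} and cancelling the common factor $A$ yields \eqref{i1 red}, whose coordinates are cubic, and since $B_0,B_2$ share no factor (one passes through $L_2$ but not $L_4$, the other through $L_4$ but not $L_2$) the degree is exactly $3$. Necessity is precisely the chain of implications carried out just before the theorem: if \eqref{T0 T2 fact} holds, then $L_2,L_4\subset\cA$ lie in one ruling and the complementary ruling consists of generators $\ell_1(X)\subset\cQ_{\lambda(X)}$, each meeting one of the base-line pairs $\{L_6,L_8\}$ or $\{L_5,L_7\}$; by continuity the same pair occurs throughout and lies on $\cA$, i.e.\ (a) or (b).

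I expect the main obstacle to be the sweeping step: making rigorous that the transversal ruling $R'$ genuinely fills $\cA$ and that the ``lies in a single $\cQ_\nu$'' lemma applies to its generic member, and above all controlling the degenerate configurations (some of the eight lines failing to be pairwise skew, or $A,B_0,B_2$ acquiring spurious common components) that the genericity hypotheses in the statement are designed to exclude.
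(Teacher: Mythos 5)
Your proposal is correct and follows essentially the same route as the paper: necessity is taken from the discussion preceding the theorem, and sufficiency is proved by identifying the generator $\ell_1(X)$ (the unique transversal to $L_2,L_4$ through $X\in\cA$) with the unique transversal to $L_6,L_8$ through $X$, which is a generator of $\cQ_{\lambda(X)}$, so that $T_0,T_2$ vanish on all of $\cA$; the identification of $B_0,B_2$ is the same multiplicity bookkeeping along $L_2,L_4,L_5,\ldots,L_8$. Your explicit parametrization $(Q_0-\lambda Q_\infty)|_m=st\,[\,U_1(p_8)U_2(p_6)-\lambda U_3(p_6)U_4(p_8)\,]$ and the remark that four pairwise skew lines force $\cA$ to be smooth are just slightly more detailed versions of steps the paper leaves implicit.
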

\begin{proof}
Necessity is already shown. To prove the converse statement, assume, for the sake of definiteness, that the four pairwise skew lines $L_2,L_4,L_6,L_8$ lie on a common quadric $\cA=\{A=0\}$ (case (a)).  We have to show that both $T_2$, $T_0$ are divisible by $A$. For this, we have to show that for any $X\in\cA$, the line $\ell_1(X)$ lies on $\cQ_{\lambda(X)}$. Consider the ruling of the quadric $\cA$ complementary to the one containing $L_2,L_4,L_6,L_8$. The line of this ruling through a given point $X\in\cA$ can be alternatively defined either as the unique line through $X$ intersecting $L_2, L_4$, that is, the line $\ell_1(X)$, or as the unique line through $X$ intersecting $L_6, L_8$, that is, the corresponding generator of $\cQ_{\lambda(X)}$. This proves the statement above. To demonstrate the statements about $\cB_0$, $\cB_2$, we observe that, by definition, both $T_0$ and $T_2$ vanish along all four base lines $L_5,L_6,L_7,L_8$, as well as along $L_2$ and $L_4$, where (as pointed out at the beginning of the present section) $L_2$ is a double line of $\{T_0=0\}$, while $L_4$ is a double line of $\{T_2=0\}$.
\end{proof}

\textbf{Remark.} Cases (a) and (b) of Theorem \ref{geprop} are brought one into each other by a simple renaming $U_3\leftrightarrow U_4$, which leads to $L_5\leftrightarrow L_6$, $L_7\leftrightarrow L_8$, but does not change the geometry.

\begin{pro} \label{prop indeterminacy set}
Assume that the pencils $\cP_\mu$ and $\cQ_\lambda$ are in general position, i.e., each of the base lines $L_5,\ldots,L_8$ of the second pencil is pairwise skew to each of the lines $L_2$, $L_4$. Then:
\begin{itemize}
\item in the case {\rm(a)}, the indeterminacy set of the involution $i_1$ is 
$$
 I(i_1) = L_2 \cup L_4 \cup L_5 \cup L_7 \cup L_9 \cup L_{10},
 $$ 
 where $L_9,L_{10}$ are the two lines intersecting all four skew lines $L_2,L_4,L_5,L_7$;
 \item in the case {\rm(b)}, the indeterminacy set of the involution $i_1$ is 
$$
 I(i_1) = L_2 \cup L_4 \cup L_6 \cup L_8 \cup L_9 \cup L_{10},
 $$ 
 where $L_9,L_{10}$ are the two lines intersecting all four skew lines $L_2,L_4,L_6,L_8$.
 \end{itemize}
\end{pro}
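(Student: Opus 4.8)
The plan is to read off the indeterminacy set directly from the explicit formula \eqref{i1 red} and then intersect the resulting hypersurfaces, using the incidence data supplied by Theorem \ref{geprop}. The four homogeneous coordinates of $i_1$ are $X_1B_2$, $X_2B_0$, $X_3B_0$, $X_4B_2$, and in general position these cubics have no common polynomial factor (since $B_0,B_2$ are distinct irreducible quadrics, sharing no linear factor with any coordinate form $X_j$), so $I(i_1)$ is exactly their common zero locus. First I would factor this locus pairwise: $\{X_1B_2=X_4B_2=0\}=\cB_2\cup L_2$, because a point with $B_2\neq 0$ must satisfy $X_1=X_4=0$, i.e.\ lie on $L_2$; likewise $\{X_2B_0=X_3B_0=0\}=\cB_0\cup L_4$. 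Hence
\begin{equation*}
I(i_1)=(\cB_2\cup L_2)\cap(\cB_0\cup L_4).
\end{equation*}

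Next I would expand this by distributivity into the four pieces $(\cB_0\cap\cB_2)\cup(\cB_2\cap L_4)\cup(\cB_0\cap L_2)\cup(L_2\cap L_4)$ and dispose of the three easy ones. By Theorem \ref{geprop}(a) the quadric $\cB_2$ passes through $L_4$ and the quadric $\cB_0$ passes through $L_2$, so $\cB_2\cap L_4=L_4$ and $\cB_0\cap L_2=L_2$; and $L_2\cap L_4=\varnothing$ since $L_2,L_4$ are skew. This reduces the whole computation to identifying the complete intersection $\cB_0\cap\cB_2$.

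The crux is therefore to show $\cB_0\cap\cB_2=L_5\cup L_7\cup L_9\cup L_{10}$. By Theorem \ref{geprop}(a), $\cB_0$ passes through $L_2,L_5,L_7$ and $\cB_2$ through $L_4,L_5,L_7$; in general position each of these triples is pairwise skew, so each quadric is the unique quadric swept out by its triple, with those three lines forming one ruling. In particular both quadrics contain the skew pair $L_5,L_7$. To produce the residual component I would invoke the classical fact that the four pairwise skew lines $L_2,L_4,L_5,L_7$ admit exactly two common transversals $L_9,L_{10}$. Since $L_9$ meets the three skew generators $L_2,L_5,L_7$ of $\cB_0$ in three distinct points, it meets $\cB_0$ in more than two points and hence lies on $\cB_0$ (in the complementary ruling); the same argument with $L_4,L_5,L_7$ places $L_9$ on $\cB_2$, and likewise for $L_{10}$. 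Thus $L_5\cup L_7\cup L_9\cup L_{10}\subseteq\cB_0\cap\cB_2$. As this is a union of four distinct lines, hence a curve of degree $4$, and the complete intersection of the two distinct quadrics $\cB_0,\cB_2$ is itself a curve of degree $4$, the inclusion is an equality. Substituting back yields $I(i_1)=L_2\cup L_4\cup L_5\cup L_7\cup L_9\cup L_{10}$, which is case (a); case (b) follows verbatim under the renaming $L_5\leftrightarrow L_6$, $L_7\leftrightarrow L_8$ noted in the remark after Theorem \ref{geprop}.

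I expect the main obstacle to be this final degree argument: guaranteeing that $\cB_0\cap\cB_2$ is a proper intersection of degree exactly $4$ consisting precisely of the four exhibited lines. This needs the general-position hypothesis to ensure $\cB_0\neq\cB_2$ (so the intersection is one-dimensional) and that $L_5,L_7,L_9,L_{10}$ are four genuinely distinct lines with no hidden multiplicity, so that they exhaust the degree-$4$ intersection with nothing left over. At the same point I would verify the existence and distinctness of $L_9,L_{10}$, which is exactly the statement that four generic pairwise skew lines possess two common transversals.
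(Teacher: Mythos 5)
Your argument is correct, and its core coincides with the paper's: both proofs hinge on identifying $\cB_0\cap\cB_2$ with the four lines $L_5\cup L_7\cup L_9\cup L_{10}$, by placing the two common transversals $L_9,L_{10}$ of the skew quadruple $L_2,L_4,L_5,L_7$ on both quadrics (a line meeting a quadric in three points lies on it) and then invoking the degree-$4$ count for the intersection of two distinct quadrics. Where you diverge is the exhaustion step. The paper adds $L_2$ and $L_4$ by inspection of \eqref{i1 red} and then stops, appealing to the general fact that the indeterminacy set of a degree-$3$ birational map of $\mbP^3$ is a curve of degree $6$, so six lines must be everything. You avoid that global input entirely: you write the common zero locus of the four cubics as $(\cB_2\cup L_2)\cap(\cB_0\cup L_4)$ and expand by distributivity, which is more elementary and makes explicit exactly where each incidence from Theorem \ref{geprop} enters ($L_2\subset\cB_0$, $L_4\subset\cB_2$, $L_2\cap L_4=\varnothing$). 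The price is that you must check the four cubics share no common factor, which you do note; the paper's route gets the upper bound for free but leans on an unproved classical degree formula. Both arguments carry the same implicit genericity requirements --- that $\cB_0\neq\cB_2$, equivalently that $L_2,L_4,L_5,L_7$ do not lie on a common quadric, so that there are exactly two distinct transversals --- and you flag these more explicitly than the paper does.
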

\begin{proof}
We consider, for definiteness, the case (a). 

Given four pairwise skew lines $L_2,L_4,L_5,L_7$, there exist exactly two lines, say $L_9$ and $L_{10}$, that intersect all four. To show this, recall that we can define $\cB_2$ as the quadric through the three pairwise skew lines $L_4,L_5,L_7$. These three lines belong to one ruling of $\cB_2$. The line $L_2$ intersects $\cB_2$ at two points. Let now $L_9$ and $L_{10}$ be the lines from the second ruling of $\cB_2$ through those two points. Then they intersect all four lines $L_2,L_4,L_5,L_7$. And, by construction, they lie on $\cB_2$.

Now consider the quadric $\cB_0$ through $L_2,L_5,L_7$. Exactly as above, we show that $L_9,L_{10}$ lie on $\cB_0$. Therefore, all four lines $L_5,L_7,L_9,L_{10}$ belong to the intersection $\cB_0 \cap \cB_2$ and, since this intersection is a curve of degree 4, it coincides with those four lines. Thus, these four lines belong to the indeterminacy set $I(i_1)$. 

Finally, we see from \eqref{i1 red} that the line $L_2=\{X_1=X_4=0\}$, where $B_0=0$, and the line $L_4=\{X_2=X_3=0\}$, where $B_2=0$, also belong to $I(i_1)$. And these six lines exhaust $I(i_1)$, since the indeterminacy set of a birational 3-dimensional map of degree 3 is a curve of degree 6.
\end{proof}

\section{Drop of degree: two pencils with one common quadric}
\label{section: construction 2}

Here we consider another possibility to achieve that the involutions $i_1$, $i_2$ be of degree 3, realizing case (2) mentioned at the beginning of Section \ref{ss:dropdegree}. Suppose that the pencils $\cP_\mu$ and $\cQ_\lambda$ have one common quadric, say $\cP_{\mu_0}$. 
\begin{theo} \label{Theorem construction 2}
If
\begin{equation}
Q_0=X_1X_2-\mu_0X_3X_4,
\end{equation}
and $Q_\infty(X_1,X_2,X_3,X_4)$ is an arbitrary homogeneous polynomial of degree 2, then the polynomials $T_2$, $T_0$ admit a factorization as in \eqref{T0 T2 fact}, with
\begin{equation}
A=-Q_0(X_1,X_2,X_3,X_4),
\end{equation}
and 
\begin{eqnarray}
B_2 & = & Q_\infty(0,X_2,X_3,0),\\
B_0 & = & Q_\infty(X_1,0,0,X_4),
\end{eqnarray}
so that each of the quadrics $\cB_2=\{B_2=0\}$ and $\cB_0=\{B_0=0\}$ is a pair of planes.
In this case, the involution $i_1$ along the generators of the pencil $\cP_\mu=\{X_1X_2-\mu X_3X_4=0\}$ defined by the intersections with the pencil $\cQ_\lambda$ is given by    \eqref{i1 red} and has degree 3. The indeterminacy set $I(i_1)$ consists of the four lines $\cB_0\cap\cB_2$, and of the two lines $L_2=\{X_1=X_4=0\}$ and $L_4=\{X_2=X_3=0\}$. These six lines form the side lines of a tetrahedron.
\end{theo}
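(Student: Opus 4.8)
The plan is to verify the asserted factorization by direct substitution, read off the reduced form of $i_1$ and its degree, and then identify the indeterminacy set geometrically, the tetrahedron being the final bookkeeping step.

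First I would substitute $Q_0=X_1X_2-\mu_0X_3X_4$ into the definitions \eqref{defT2}, \eqref{defT0}. The decisive observation is that $Q_0(0,X_2,X_3,0)=0$ (setting $X_1=X_4=0$ annihilates both monomials of $Q_0$) and, symmetrically, $Q_0(X_1,0,0,X_4)=0$. Hence the first terms in $T_2$ and $T_0$ vanish identically, leaving $T_2=-Q_0\cdot Q_\infty(0,X_2,X_3,0)$ and $T_0=-Q_0\cdot Q_\infty(X_1,0,0,X_4)$, which is precisely \eqref{T0 T2 fact} with $A=-Q_0$, $B_2=Q_\infty(0,X_2,X_3,0)$, $B_0=Q_\infty(X_1,0,0,X_4)$. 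Since $B_2$ is a binary quadratic form in $(X_2,X_3)$ and $B_0$ a binary quadratic form in $(X_1,X_4)$, each factors over $\mbC$ into two linear forms, so $\cB_2$ and $\cB_0$ are each a pair of planes. Cancelling the common factor $A$ in the projective formula \eqref{i1} yields the reduced form \eqref{i1 red}, whose components are cubics, so $i_1$ has degree $3$.

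Next I would compute $I(i_1)$ as the common zero locus of the four cubics $X_1B_2,X_2B_0,X_3B_0,X_4B_2$, by cases on the vanishing of $B_0,B_2$. If both vanish we obtain $\cB_0\cap\cB_2$. If $B_2=0$ but $B_0\neq0$, then $X_2=X_3=0$, i.e.\ the line $L_4$; conversely on $L_4$ one has $B_2=Q_\infty(0,0,0,0)=0$ automatically, so all four cubics vanish and $L_4\subset I(i_1)$. Symmetrically $B_0=0$, $B_2\neq0$ forces $L_2$. The remaining case makes all four coordinates vanish and is empty in $\mbP^3$. Thus $I(i_1)=(\cB_0\cap\cB_2)\cup L_2\cup L_4$. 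Writing $\cB_0=\Pi_a\cup\Pi_b$ and $\cB_2=\Pi_c\cup\Pi_d$, the intersection $\cB_0\cap\cB_2$ breaks into the four lines $\Pi_a\cap\Pi_c,\ \Pi_a\cap\Pi_d,\ \Pi_b\cap\Pi_c,\ \Pi_b\cap\Pi_d$, so $I(i_1)$ is a union of six lines, consistent with the fact that the indeterminacy set of a degree-$3$ birational map of $\mbP^3$ is a curve of degree $6$.

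For the tetrahedron claim the crucial point is that the planes $\Pi_a,\Pi_b$ cutting out $\cB_0$ have equations in $X_1,X_4$ only, while $\Pi_c,\Pi_d$ cutting out $\cB_2$ involve only $X_2,X_3$. Consequently $\Pi_a\cap\Pi_b=\{X_1=X_4=0\}=L_2$ and $\Pi_c\cap\Pi_d=\{X_2=X_3=0\}=L_4$, so the six lines of $I(i_1)$ are exactly the six pairwise intersections of the four planes $\Pi_a,\Pi_b,\Pi_c,\Pi_d$. For generic $Q_\infty$ (so that $B_0,B_2$ have distinct roots and the four planes are in general position) these are the four faces of a tetrahedron, with $L_2,L_4$ a pair of opposite edges and $\Pi_a\cap\Pi_c,\Pi_a\cap\Pi_d,\Pi_b\cap\Pi_c,\Pi_b\cap\Pi_d$ the remaining four. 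I expect the only genuine subtlety to be the indeterminacy-set bookkeeping—checking that $L_2,L_4$ really lie in $I(i_1)$ via the automatic vanishing of $B_0,B_2$ along them, and invoking genericity of $Q_\infty$ so the four planes are distinct; the factorization and the degree count are immediate.
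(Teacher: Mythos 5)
Your proposal is correct and follows essentially the same route as the paper: both hinge on the single observation that $Q_0(0,X_2,X_3,0)=Q_0(X_1,0,0,X_4)=0$, which collapses $T_2$ and $T_0$ to $-Q_0\cdot Q_\infty(0,X_2,X_3,0)$ and $-Q_0\cdot Q_\infty(X_1,0,0,X_4)$, and both then use the fact that $B_0,B_2$ depend on two variables each to split $\cB_0,\cB_2$ into pairs of planes. Your explicit case analysis of $I(i_1)$ and the tetrahedron bookkeeping (including the mild genericity needed on $Q_\infty$ for the four planes to be distinct) merely spells out what the paper dismisses as ``follows immediately.''
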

\begin{proof}
This follows directly from \eqref{defT2}, \eqref{defT0}, upon taking into account that 
$$
Q_0(0,X_2,X_3,0)=0,\quad Q_0(X_1,0,0,X_4)=0.
$$
The statement about $I(i_1)$ follows immediately, after observing that $B_0$ and $B_2$ depend only on two variables each and therefore are factorisable into linear factors.
\end{proof}

\section{Example I: Kahan discretization of the Euler top}
\label{ss:dET}

The \emph{Euler top} (ET) is a free rigid body rotating around a fixed point. The evolution of the components of the angular momentum of ET in the moving frame is described by the following system:
\begin{equation}\label{ET}
\begin{cases} \dot{z}_1 = \alpha_1 z_2 z_3, \\ \dot{z}_2 = \alpha_2 z_3 z_1, \\ \dot{z}_3 = \alpha_3 z_1 z_2. \end{cases}
\end{equation}
This is an integrable system with the following constants of motion:
\begin{equation}
H_1 = \alpha_2 z_3^2 - \alpha_3 z_2^2,\qquad
H_2 = \alpha_3 z_1^2 - \alpha_1 z_3^2,\qquad
H_3 = \alpha_1 z_2^2 - \alpha_2 z_1^2.
\end{equation} 
Only two of them are functionally independent, since $\alpha_1 H_1 + \alpha_2 H_2 + \alpha_3 H_3 =0$. 
Its Kahan-style discretization was first introduced by Hirota and Kimura \cite{HirotaKimura2000}, and is defined by the following implicit equations of motion:
\begin{equation}\label{dET implicit} 
\begin{cases} \widetilde{z}_1-z_1 = \varepsilon \alpha_1 (\widetilde{z}_2 z_3 + z_2 \widetilde{z}_3), \\ 
\widetilde{z}_2-z_2 = \varepsilon \alpha_2 (\widetilde{z}_3 z_1 + z_3 \widetilde{z}_1), \\ 
\widetilde{z}_3-z_3 = \varepsilon \alpha_3 (\widetilde{z}_1 z_2 + z_1 \widetilde{z}_2).
\end{cases}
\end{equation} 
Solving this for $\widetilde{z} = (\widetilde{z}_1,\widetilde{z}_2,\widetilde{z}_3)$ in terms of $z=(z_1,z_2,z_3)$, we obtain the following map, which we call dET:
\begin{equation}\label{ET explicit inhom}
\left\lbrace \begin{aligned} \widetilde{z}_1 &= \frac{z_1 + 2 \varepsilon \alpha_1 z_2 z_3 + \varepsilon^2 z_1 (-\alpha_2 \alpha_3 z_1^2 + \alpha_3 \alpha_1 z_2^2 + \alpha_1 \alpha_2 z_3^2)}{\Delta(z,\varepsilon)} ,\\ 
\widetilde{z}_2 &= \frac{z_2 + 2\varepsilon \alpha_2 z_3 z_1 + \varepsilon^2 z_2 (\alpha_2 \alpha_3 z_1^2 - \alpha_3\alpha_1 z_2^2 + \alpha_1 \alpha_2 z_3^2)}{\Delta(z,\varepsilon)}, \\ 
\widetilde{z}_3 &= \frac{z_3 + 2\varepsilon \alpha_3 z_1 z_2 + \varepsilon^2 z_3(\alpha_2 \alpha_3 z_1^2 + \alpha_3 \alpha_1 z_2^2 - \alpha_1 \alpha_2 z_3^2)}{\Delta(z,\varepsilon)},  \end{aligned} \right. 
\end{equation}
where
\begin{equation}\label{dET denom}
\Delta(z,\varepsilon) = 1 - \varepsilon^2(\alpha_2 \alpha_3 z_1^2 + \alpha_3 \alpha_1 z_2^2 + \alpha_1 \alpha_2z_3^2) - 2 \varepsilon^3 \alpha_1 \alpha_2 \alpha_3 z_1 z_2 z_3.
\end{equation} 
Various aspects of integrability of dET were discussed in \cite{PS2010}, \cite{PPS2011}, \cite{Kimura_2017}. In particular, it possesses the following integrals of motion:
   \begin{equation}
   \mathcal{H}_1(\varepsilon) = \dfrac{\alpha_2 z_3^2 - \alpha_3 z_2^2}{1-\varepsilon^2 \alpha_2 \alpha_3 z_1^2},\quad
   \mathcal{H}_2(\varepsilon) = \dfrac{\alpha_3 z_1^2 - \alpha_1 z_3^2}{1-\varepsilon^2 \alpha_3 \alpha_1 z_2^2},\quad
   \mathcal{H}_3(\varepsilon) = \dfrac{\alpha_1 z_2^2 - \alpha_2 z_1^2}{1-\varepsilon^2 \alpha_1 \alpha_2 z_3^2}.
   \label{HdET}
   \end{equation}
Only two of them are independent, because of
\begin{equation}
\alpha_1 \mathcal{H}_1(\varepsilon) + \alpha_2 \mathcal{H}_2(\varepsilon) + \alpha_3 \mathcal{H}_3(\varepsilon) + \varepsilon^4 \alpha_1 \alpha_2 \alpha_3 \mathcal{H}_1(\varepsilon) \mathcal{H}_2(\varepsilon) \mathcal{H}_3(\varepsilon) =0.
\label{epsrel}
\end{equation}
Note that both the conserved quantities \eqref{HdET} and the relation \eqref{epsrel} are  $\varepsilon$-deformations of the corresponding objects for the continuous-time system.

In homogeneous coordinates $x=[x_1:x_2:x_3:x_4]$, we arrive at the degree 3 birational map $\widetilde{x}=f(x;\varepsilon)$ on $\mbP^3$:
\begin{equation} \label{dET hom}
\left\lbrace \begin{aligned}
\widetilde{x}_1 &= x_1 x_4^2 + 2 \varepsilon \alpha_1 x_2 x_3 x_4 + \varepsilon^2 x_1 (-\alpha_2 \alpha_3 x_1^2 + \alpha_3 \alpha_1 x_2^2 + \alpha_1 \alpha_2 x_3^2), \\
\widetilde{x}_2 &= x_2 x_4^2 + 2\varepsilon \alpha_2 x_3 x_1 x_4 + \varepsilon^2 x_2 (\alpha_2 \alpha_3 x_1^2 - \alpha_3\alpha_1 x_2^2 + \alpha_1 \alpha_2 x_3^2), \\
\widetilde{x}_3 &= x_3 x_4^2 + 2\varepsilon \alpha_3 x_1 x_2 x_4 + \varepsilon^2 x_3(\alpha_2 \alpha_3 x_1^2 + \alpha_3 \alpha_1 x_2^2 - \alpha_1 \alpha_2 x_3^2) ,\\
\widetilde{x}_4 &= x_4^3 - \varepsilon^2x_4(\alpha_2 \alpha_3 x_1^2 + \alpha_3 \alpha_1 x_2^2 + \alpha_1 \alpha_2x_3^2)  - 2 \varepsilon^3 \alpha_1 \alpha_2 \alpha_3 x_1 x_2 x_3.
\end{aligned} \right.
 \end{equation} 
Observe that the relations $\mathcal H_3=\mu$ and $\mathcal H_2=\lambda$ define the \emph{separable pencils of quadrics}
$$ 
\cP_\mu: \quad (\alpha_1 x_2^2 - \alpha_2 x_1^2)-\mu(x_4^2 - \varepsilon^2 \alpha_1 \alpha_2 x_3^2)=X_1X_2-\mu X_3X_4=0,
$$ 
and
$$
\cQ_\lambda: \quad (\alpha_3 x_1^2  - \alpha_1 x_3^2)-\lambda(x_4^2 - \varepsilon^2 \alpha_3 \alpha_1 x_2^2)=U_1U_2-\lambda U_3U_4=0,
$$
where we can choose the corresponding linear forms as follows:
\begin{equation}\label{XUdET}
\left\lbrace
\begin{aligned}
X_1 &= \sqrt{\alpha_1} x_2 - \sqrt{\alpha_2}x_1,\\
X_2 &= \sqrt{\alpha_1} x_2 + \sqrt{\alpha_2}x_1,\\
X_3 &= x_4 - \varepsilon \sqrt{\alpha_1\alpha_2}x_3,\\
X_4 &= x_4 + \varepsilon \sqrt{\alpha_1\alpha_2}x_3,
\end{aligned}
\right. \qquad \qquad 
\left\lbrace
\begin{aligned}
U_1 &= \sqrt{\alpha_3} x_1 - \sqrt{\alpha_1}x_3, \\
U_2 &= \sqrt{\alpha_3} x_1 + \sqrt{\alpha_1}x_3, \\
U_3 &=  x_4 - \varepsilon \sqrt{\alpha_3\alpha_1}x_2,\\
U_4 &=  x_4 + \varepsilon \sqrt{\alpha_3\alpha_1}x_2.
\end{aligned}
\right.
\end{equation}
Observe that both pencils of quadrics are invariant under $\sigma: X_3\leftrightarrow X_4$ which in coordinates $x$ reads as $\sigma: x_3\leftrightarrow -x_3$.
\smallskip

\begin{theo} \label{Theorem dET}
The linear forms \eqref{XUdET} satisfy the conditions of Theorem \ref {geprop}. The map 
\begin{equation}
f=\sigma\circ i_1=i_2\circ \sigma,
\label{th:eq}
\end{equation}
where $i_1$, $i_2$ are the involutions along the generators of $\cP_\mu=\{X_1X_2-\mu X_3X_4=0\}$ defined by the intersections with the pencil $\cQ_\lambda=\{U_1U_2-\lambda U_3U_4=0\}$, coincides with dET, when expressed in coordinates $x$.
\end{theo}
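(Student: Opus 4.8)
The statement has two parts: a geometric verification that the forms \eqref{XUdET} satisfy the hypotheses of Theorem \ref{geprop}, and an explicit coordinate computation showing that the resulting $f=\sigma\circ i_1$ is dET. I would handle them in turn, the second being where the real work lies.

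For the verification I would first compute the eight base lines $L_1,\dots,L_8$ in the original coordinates $x$ by solving the defining pairs $\{X_i=X_j=0\}$ and $\{U_i=U_j=0\}$ from \eqref{XUdET}. Testing the two candidate quadruples of Theorem \ref{geprop} shows that $L_2$ and $L_6$ always meet (their common point is $[\sqrt{\alpha_1}:\sqrt{\alpha_2}:\sqrt{\alpha_3}:-\varepsilon\sqrt{\alpha_1\alpha_2\alpha_3}]$), so case (a) is impossible, whereas $L_2,L_4,L_5,L_7$ are pairwise skew for distinct $\alpha_i$ and $\varepsilon\neq0$; thus one is in case (b). To exhibit the common quadric I would write $A=\sum_{i\le j}s_{ij}x_ix_j$ and impose, for each spanning pair $(P,Q)$ of these four lines, the three conditions $A(P)=A(Q)=0$ and $P^{\top}M_AQ=0$, where $M_A$ is the Gram matrix of $A$. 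The point conditions fix the diagonal and kill $s_{12},s_{13},s_{24},s_{34}$, while the bilinear conditions force $s_{14}=s_{23}=0$, leaving the unique solution $\cA=\{\varepsilon^2\alpha_2\alpha_3\,x_1^2-\varepsilon^2\alpha_1\alpha_3\,x_2^2-\varepsilon^2\alpha_1\alpha_2\,x_3^2+x_4^2=0\}$ (notably \emph{not} a level set of $\mathcal H_1$). This confirms case (b), so $i_1$ is the degree-$3$ map \eqref{i1 red}, and I would obtain $B_0,B_2$ by the same linear-algebra recipe applied to the line triples $L_2,L_6,L_8$ and $L_4,L_6,L_8$ named in the theorem.

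With $B_0,B_2$ in hand I would assemble $f=\sigma\circ i_1$ in the coordinates $X$ as $[X_1B_2:X_2B_0:X_4B_2:X_3B_0]$ (recall $\sigma:X_3\leftrightarrow X_4$), and then pass to the coordinates $x$ via the inverse of \eqref{XUdET}, namely $x_1=(X_2-X_1)/(2\sqrt{\alpha_2})$, $x_2=(X_1+X_2)/(2\sqrt{\alpha_1})$, $x_3=(X_4-X_3)/(2\varepsilon\sqrt{\alpha_1\alpha_2})$, $x_4=(X_3+X_4)/2$. This inverse is used both to rewrite $X_i,B_0,B_2$ as polynomials in $x$ and to read off the four homogeneous components $\widetilde x_i$ of $f$; for instance $\widetilde x_1=(X_2B_0-X_1B_2)/(2\sqrt{\alpha_2})$ and $\widetilde x_4=(X_4B_2+X_3B_0)/2$. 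The claim is that, after removing a common scalar factor, these four cubics coincide with the right-hand sides of \eqref{dET hom}.

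The main obstacle is precisely this final matching: it is a homogeneous degree-$3$ polynomial identity in $(x_1,x_2,x_3,x_4)$ with symbolic parameters $\alpha_1,\alpha_2,\alpha_3,\varepsilon$, and the two sides agree only up to an overall nonzero factor (and possibly a spurious common factor manufactured by composing the two steps), so I would first isolate and cancel that factor and then compare coefficients monomial by monomial, a verification I would confirm with a computer algebra system. A useful built-in check is that, although $X_i,U_i,B_0,B_2$ carry half-integer powers of the $\alpha_i$, the target \eqref{dET hom} is polynomial in the $\alpha_i$; hence every odd power of each $\sqrt{\alpha_i}$ must cancel, and tracking this cancellation both guards against algebra errors and explains why the square roots introduced in \eqref{XUdET} leave no trace in dET.
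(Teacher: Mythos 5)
Your route is essentially the paper's: verify that $L_2,L_4,L_5,L_7$ are pairwise skew and lie on a common quadric so that case (b) of Theorem \ref{geprop} applies (your $\cA$, written in the coordinates $x$, is exactly $A=X_3X_4-\varepsilon^2\alpha_3X_1X_2$ from \eqref{A dET}), then assemble $f$ as $[X_1B_2:X_2B_0:X_4B_2:X_3B_0]$ and match it against \eqref{dET hom} by a symbolic computation in the coordinates $x$; your parity check on the powers of $\sqrt{\alpha_i}$ is a sensible safeguard. However, there is one step that fails as described: the determination of $B_0$ and $B_2$. The ``unique quadric through a triple of lines'' recipe requires the three lines to be pairwise skew, and here they are not: $L_2$ meets both $L_6$ and $L_8$ (indeed $L_2\cap L_6=[0:1:-\varepsilon\sqrt{\alpha_3}:0]$ in the coordinates $X$ — the very point you exhibited in $x$ to rule out case (a)), and likewise $L_4$ meets both. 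This is precisely the degenerate tetrahedral configuration of Remark 2 and Theorem \ref{Theorem construction 2}: $\cB_0$ and $\cB_2$ are pairs of planes, and the linear system of quadrics containing $L_2,L_6,L_8$ is three-dimensional, spanned by $X_4^2-\varepsilon^2\alpha_3X_1^2$, by $A$ itself, and by $X_1X_3-X_2X_4$. So imposing containment of those three lines does not single out $B_0$, and choosing the wrong member of the family would wreck the final identification with dET.

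The underlying point is that the conditions ``$\{B_0=0\}$ passes through $L_2,L_6,L_8$'' in Theorem \ref{geprop} are necessary properties of $B_0$, not a definition of it; they characterize $B_0$ only in the general position of Proposition \ref{prop indeterminacy set}, which the dET example violates. The polynomials $B_0,B_2$ are the cofactors $T_0/A$, $T_2/A$ of the factorization \eqref{T0 T2 fact}, so you must either compute $T_0,T_2$ from \eqref{defT0}, \eqref{defT2} and divide by $A$, or posit the candidates $B_0=X_4^2-\varepsilon^2\alpha_3X_1^2$, $B_2=X_3^2-\varepsilon^2\alpha_3X_2^2$ of \eqref{B dET} and verify the factorization directly. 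Once $B_0,B_2$ are correctly identified, the rest of your computation (the inverse change of coordinates and the monomial-by-monomial comparison with \eqref{dET hom}) goes through and is the same ``straightforward computation'' the paper invokes.
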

\begin{proof}
We express the linear forms $U_i$ in coordinates $X$:
\begin{eqnarray}
U_1 & = & \frac{1}{2\varepsilon\sqrt{\alpha_2}}\big(X_3-X_4+\varepsilon\sqrt{\alpha_3}(X_2-X_1)\big), \label{U1 dET}\\
U_2 & = & \frac{1}{2\varepsilon\sqrt{\alpha_2}}\big(X_4-X_3+\varepsilon\sqrt{\alpha_3}(X_2-X_1)\big), \label{U2 dET}\\
U_3 & = & \frac{1}{2}\big(X_3+X_4-\varepsilon\sqrt{\alpha_3}(X_1+X_2)\big),   \label{U3 dET} \\
U_4 & = & \frac{1}{2}\big(X_3+X_4+\varepsilon\sqrt{\alpha_3}(X_1+X_2)\big).  \label{U4 dET}
\end{eqnarray}
This allows us to easily compute equations of the lines $L_5,\ldots,L_8$ in coordinates $X$:
\begin{eqnarray}
L_5 & = & \{U_1 = U_3 = 0\}=\{ X_3-\varepsilon\sqrt{\alpha_3}X_1=0, \; X_4-\varepsilon\sqrt{\alpha_3}X_2=0\}, \\
L_6 & = & \{U_1 = U_4 =0\}= \{ X_3+\varepsilon\sqrt{\alpha_3}X_2=0, \; X_4+\varepsilon\sqrt{\alpha_3}X_1=0\}, \\
L_7 & = & \{U_2 = U_4 = 0\}=\{ X_3+\varepsilon\sqrt{\alpha_3}X_1=0, \; X_4+\varepsilon\sqrt{\alpha_3}X_2=0\}, \\
L_8 & = &  \{U_2 = U_3 =0\}= \{ X_3-\varepsilon\sqrt{\alpha_3}X_2=0, \; X_4-\varepsilon\sqrt{\alpha_3}X_1=0\}. 
\end{eqnarray}
Now one immediately checks that the four lines $L_2=\{X_1=X_4=0\}$, $L_4=\{X_2=X_3=0\}$, $L_5$ and $L_7$ are pairwise skew and lie on the quadric $A=0$, where
\begin{equation}\label{A dET}
A=X_3X_4-\varepsilon^2\alpha_3X_1X_2.
\end{equation}
Thus, the conditions of Theorem  \ref {geprop} (case (b)) are satisfied and it follows that the map $f=\sigma\circ i_1=i_2\circ \sigma$ has the form
\begin{equation}\label{dET in X}
f: [X_1: X_2: X_3 : X_4] \mapsto [ X_1B_2:  X_2 B_0: X_4B_2 : X_3B_0],
\end{equation} 
where 
\begin{equation}\label{B dET}
B_0 =  X_4^2 - \varepsilon^2\alpha_3X_1^2,\qquad B_2 = X_3^2 - \varepsilon^2\alpha_3X_2^2.
\end{equation}
As guaranteed by Theorem  \ref {geprop}, $B_0$ vanishes on $L_2$, $L_6$ and $L_8$, while $B_2$ vanishes on $L_4$, $L_6$ and $L_8$.

Now it is a matter of a straightforward computation to see that,  in the coordinates $x$ given by  \eqref{XUdET}, the map \eqref{dET in X}  coincides with the map  \eqref{dET hom}.
\end{proof}

The decompositions \eqref{th:eq} are illustrated in figure \ref{fig}.

\begin{figure}[h]
\begin{center}
\includegraphics[trim=0 2cm 1cm 3cm , scale=0.34]{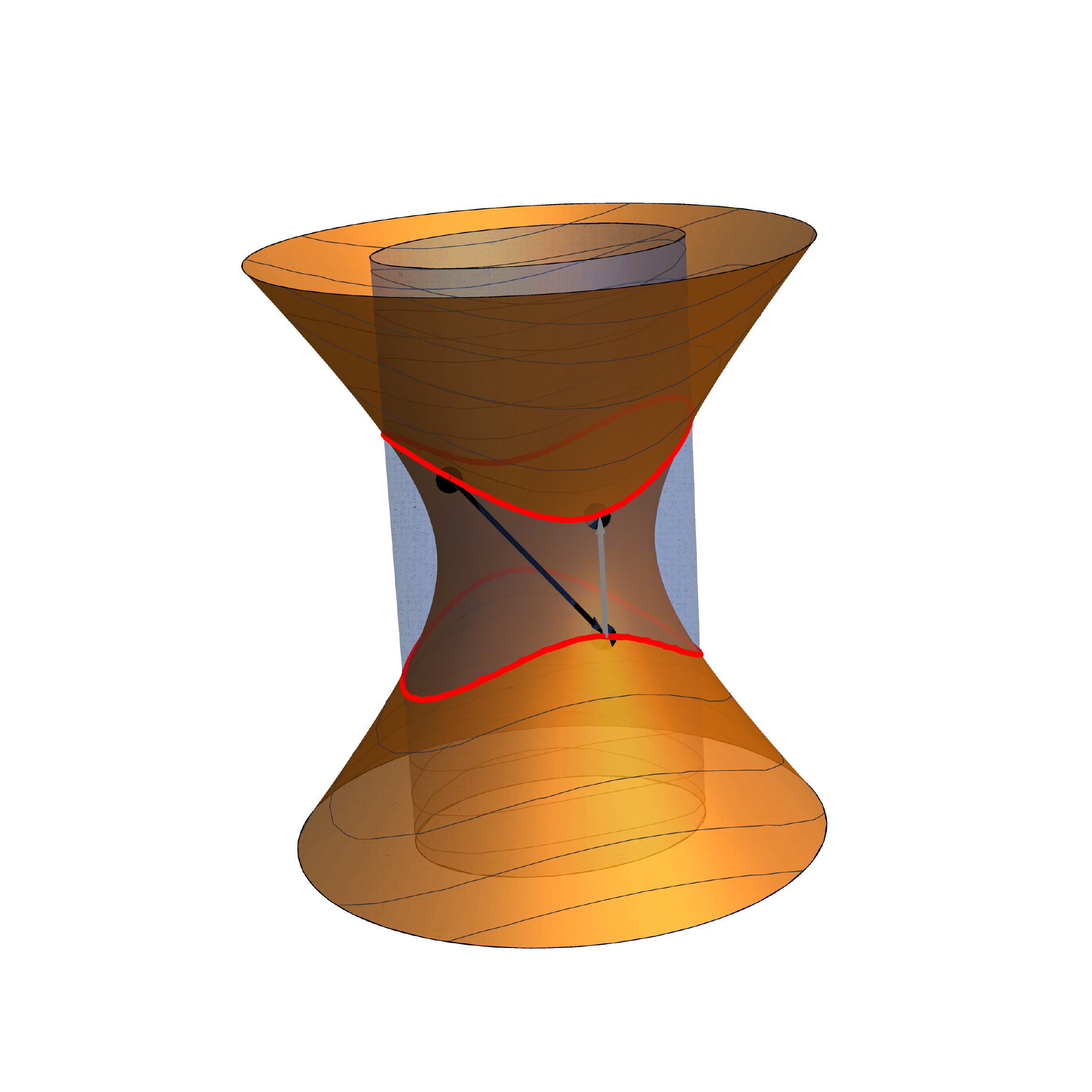}\includegraphics[trim=1cm 2cm 0 3cm , scale=0.34]{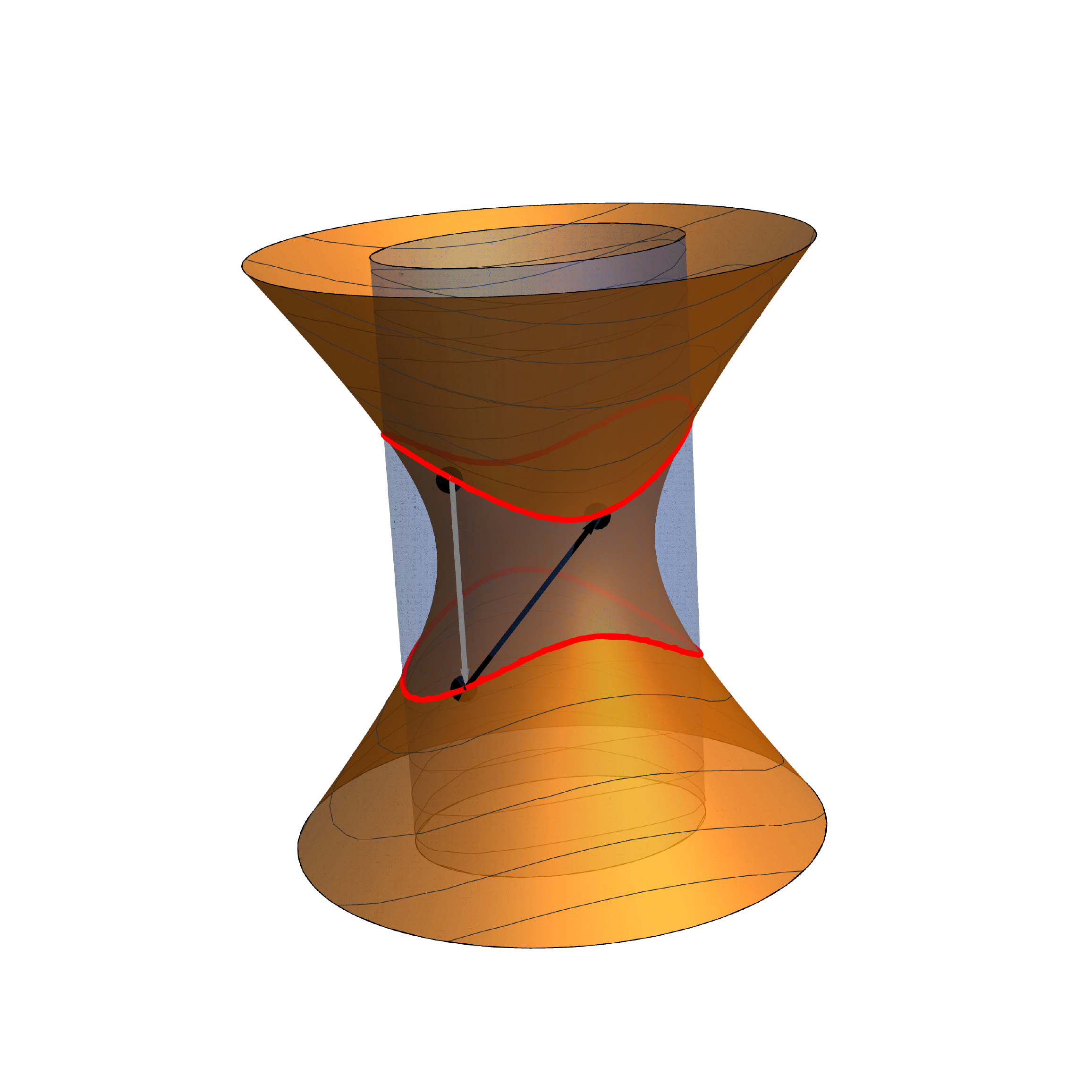}
\end{center}
\caption{Map dET as a composition of two involutions, $f=\sigma\circ i_1$ (left) and $f=i_2 \circ \sigma$ (right).
}
\label{fig}
\end{figure}

\textbf{Remark 1.} The factorization of dET in a composition of involutions along generators of a quadric was first considered in thesis \cite{Smeenk} under the guidance of the second author.

\textbf{Remark 2.} From \eqref{U1 dET}--\eqref{U4 dET} one sees that the quadric $\{A=0\}$ with $A$ from \eqref{A dET} belongs to both pencils $\cP_\mu=\{X_1X_2-\mu X_3X_4=0\}$ and $\cQ_\lambda=\{U_1U_2-\lambda U_3U_4=0\}$. Thus, dET can be also seen as a particular case of the construction of Section \ref{section: construction 2}. As stated in Theorem \ref{Theorem construction 2},
both  quadrics $\{B_0=0\}$ and $\{B_2=0\}$ degenerate into a pair of planes. Their intersection consists of the lines $L_6$, $L_8$ and 
\begin{eqnarray}
L_9 & = & \{ X_3+\varepsilon\sqrt{\alpha_3}X_2=0, \; X_4-\varepsilon\sqrt{\alpha_3}X_1=0\}, \\ 
L_{10} & = & \{ X_3-\varepsilon\sqrt{\alpha_3}X_2=0, \; X_4+\varepsilon\sqrt{\alpha_3}X_1=0\}.
\end{eqnarray}
The intersection points
$$
L_6\cap L_9=[0:1:-\varepsilon\sqrt{\alpha_3}:0], \qquad L_8\cap L_{10} = [0:1:\varepsilon\sqrt{\alpha_3}:0]
$$
lie on the line $L_2$, while the intersection points
$$
L_6\cap L_{10}=[1:0:0:-\varepsilon\sqrt{\alpha_3}], \qquad L_8\cap L_9 = [1:0:0:\varepsilon\sqrt{\alpha_3}]
$$
lie on the line $L_4$. Thus, the six lines 
$$
 I(f) = L_2 \cup L_4 \cup L_6 \cup L_8 \cup L_9 \cup L_{10},
$$ 
constitute the side lines of a tetrahedron. Birational determinantal maps of $\mbP^3$ with the indeterminacy set of this (tetrahedron) type are well known in the classical literature, see, e.g., \cite{Noether}, \cite{Hudson2}, \cite{Hudson1927}. They can be represented as $M_2\circ i\circ M_1$, where $M_1$, $M_2$ are linear projective maps, while 
\begin{eqnarray*}
i(Y_1,Y_2,Y_3,Y_4) & = & [Y_2Y_3Y_4 : Y_1Y_3Y_4:Y_1Y_2Y_4:Y_1Y_2Y_3]\\
& = & [1/Y_1:1/Y_2:1/Y_3:1/Y_4]
\end{eqnarray*} 
is the standard cubic inversion involution on $\mbP^3$. One can now easily find $M_1$, $M_2$ for dET in both coordinate systems $x$ and $X$. In particular, in coordinates $x$ we have:
\begin{pro} Map \eqref{dET hom}  coincides with $M_2\circ i\circ M_1$, where $M_1$, $M_2$ are linear projective maps with the matrices
\begin{equation}
M_1=\begin{pmatrix} -b_1 & b_2 & b_3 & 1\\ b_1 & -b_2 & b_3 & 1\\ b_1 & b_2 & -b_3 & 1\\ -b_1 & -b_2 & -b_3 & 1\end{pmatrix},  \quad 
M_2^{-1}=\begin{pmatrix} b_1 & -b_2 & -b_3 & 1\\ -b_1 & b_2 & -b_3 & 1\\ -b_1 & -b_2 & b_3 & 1\\ b_1 & b_2 & b_3 & 1\end{pmatrix}.  
\end{equation}
where
\begin{equation}
b_1=\epsilon\sqrt{\alpha_2\alpha_3}, \quad b_2=\epsilon\sqrt{\alpha_1\alpha_3}, \quad b_3=\epsilon\sqrt{\alpha_1\alpha_2}.
\end{equation}
\end{pro}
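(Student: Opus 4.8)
The plan is to prove the equivalent identity $M_2^{-1}\circ f = i\circ M_1$, which involves only the matrix $M_2^{-1}$ printed in the statement and so avoids inverting $M_2$. Both sides are cubic Cremona maps of $\mbP^3$: the right-hand side because $i\circ M_1$ has homogeneous components $Y_2Y_3Y_4,\,Y_1Y_3Y_4,\,Y_1Y_2Y_4,\,Y_1Y_2Y_3$ with $Y=M_1x$, and the left-hand side because $f$ is the degree $3$ map \eqref{dET hom} composed with a linear map. By Remark 2 both maps are of tetrahedral type, so the whole argument amounts to matching two such maps, and I would organise this geometrically rather than by brute-force expansion.

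First I would identify $M_1$. The standard inversion $i$ contracts each coordinate plane $\{Y_k=0\}$ to the opposite vertex $e_k$ and has the six coordinate edges as indeterminacy set; hence in any tetrahedral factorization the rows of $M_1$ must be the four linear forms cutting out the faces of the tetrahedron $I(f)$, equivalently the columns of $M_1^{-1}$ must be the four vertices of $I(f)$ listed in Remark 2. Transporting those vertices to the coordinates $x$ by inverting \eqref{XUdET} yields symmetric expressions such as $V_1\propto(\sqrt{\alpha_1},\sqrt{\alpha_2},\sqrt{\alpha_3},-\varepsilon\sqrt{\alpha_1\alpha_2\alpha_3})$, and a one-line check gives $M_1V_1=-4\varepsilon\sqrt{\alpha_1\alpha_2\alpha_3}\,e_4$, with the analogous statements for the other three vertices. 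This confirms that $M_1$ maps $I(f)$ to the coordinate tetrahedron, so that $i\circ M_1$ has indeterminacy set equal to $I(f)$ and contracts the four faces of $I(f)$ to $e_1,\dots,e_4$.

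Next I would pin down $M_2$. Composing with the linear map $M_2^{-1}$ on the left does not change the indeterminacy set, so $M_2^{-1}\circ f$ and $i\circ M_1$ share the same tetrahedral indeterminacy $I(f)$; moreover $M_2^{-1}\circ f$ contracts the faces of $I(f)$ to the images under $M_2^{-1}$ of the fundamental points of $f$. Those fundamental points (the images of the four faces) are the vertices of the indeterminacy tetrahedron of $f^{-1}$, and by the general Kahan property $f^{-1}(x,\varepsilon)=f(x,-\varepsilon)$ they are exactly the vertices $V_k$ with $\varepsilon$ replaced by $-\varepsilon$. Since the two printed matrices satisfy $M_2^{-1}(\varepsilon)=M_1(-\varepsilon)$, the matrix $M_2^{-1}$ sends those points to $e_1,\dots,e_4$ as well, exactly as required. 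At this stage $M_2^{-1}\circ f$ and $i\circ M_1$ are two tetrahedral cubic maps with the same base tetrahedron that contract the same four faces to the same four coordinate points, so they can differ only by a projective transformation fixing each vertex $e_k$ and preserving the coordinate tetrahedron, that is, by a diagonal element $D=\mathrm{diag}(d_1,d_2,d_3,d_4)$.

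It then remains to show that $D$ is scalar, which I would do by evaluating both maps at a single generic point and comparing, fixing the ratios $d_1:d_2:d_3:d_4$; this one evaluation replaces the full expansion of the cubic forms. The main obstacle is the justification of the previous step, namely that once the base tetrahedron, the four contracted faces and their four image points agree the residual ambiguity is exactly a diagonal element: this rests on the classical rigidity of tetrahedral (determinantal) Cremona maps referred to in Remark 2 and must be invoked with some care. If one prefers to bypass this point, the same conclusion follows by the direct route: expand the three products $Y_jY_kY_l$ and the four linear combinations of $\widetilde x_1,\dots,\widetilde x_4$ prescribed by $M_2^{-1}$, and check proportionality coefficientwise. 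Here the labour can be reduced using the Klein four-group of even sign changes $(x_1,x_2,x_3,x_4)\mapsto(\pm x_1,\pm x_2,\pm x_3,x_4)$, which is a symmetry of \eqref{dET hom} and acts on the rows of both $M_1$ and $M_2^{-1}$ by coordinate permutations, so that verifying a single component forces the remaining three.
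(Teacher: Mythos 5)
Your argument is sound and, in contrast to the paper --- which states this proposition with no proof at all, presenting it as an ``easy'' consequence of Remark~2 and the classical theory of tetrahedral Cremona maps --- it actually supplies the reasoning; the facts you quote (the vertex $V_1\propto[\sqrt{\alpha_1}:\sqrt{\alpha_2}:\sqrt{\alpha_3}:-\varepsilon\sqrt{\alpha_1\alpha_2\alpha_3}]$ obtained by inverting \eqref{XUdET}, the identity $M_1V_1=-4\varepsilon\sqrt{\alpha_1\alpha_2\alpha_3}\,e_4$ and its three analogues, and the observation $M_2^{-1}(\varepsilon)=M_1(-\varepsilon)$ combined with the Kahan reversibility $f^{-1}(x,\varepsilon)=f(x,-\varepsilon)$) all check out. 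The one step you flag as delicate, the rigidity of tetrahedral cubic maps, closes easily and needs no appeal to the classical literature: a cubic form on $\mbP^3$ vanishing on all six edges of the coordinate tetrahedron contains no monomial in at most two of the variables (restriction to the edge $\{Y_j=Y_k=0\}$ kills every monomial in the remaining two variables alone), hence lies in the $4$-dimensional span of $Y_2Y_3Y_4,\,Y_1Y_3Y_4,\,Y_1Y_2Y_4,\,Y_1Y_2Y_3$. Since Remark~2 identifies $I(f)$ as exactly the six edges of the tetrahedron whose faces are cut out by the rows of $M_1$, the four cubic components of $f\circ M_1^{-1}$ vanish on the six coordinate edges, so $f\circ M_1^{-1}=N\circ i$ for some linear $N$; your matching of the images of the contracted faces with the vertices of $I(f^{-1})$ then gives $N=M_2\circ D$ with $D$ diagonal up to a possible permutation of the vertices, and your single evaluation at a generic point simultaneously excludes the permutation and forces $D$ to be scalar. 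So the plan is complete once that one evaluation is performed. The paper's implicit proof is a brute-force symbolic verification of the coefficient identity; your route replaces most of that computation by geometry and explains \emph{why} the factorization holds, at the price of depending on the correctness of Remark~2's description of $I(f)$, which in turn rests on Theorem \ref{Theorem construction 2}.
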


\clearpage

\section{Example II: Kahan discretization of the Zhukovski-Volterra gyrostat with one non-vanishing $\beta_k$}
\label{sect: dZV beta1}

The Zhukovski-Volterra gyrostat is a generalization of the Euler top:
\begin{equation} \label{ZV}
	\begin{cases}
			\dot{z}_1 = \alpha_1 z_2 z_3 + \beta_3 z_2 - \beta_2 z_3, \\
			\dot{z}_2 = \alpha_2 z_3 z_1 + \beta_1 z_3 - \beta_3 z_1,\\
			\dot{z}_3 = \alpha_3 z_1 z_2 + \beta_2 z_1 - \beta_1 z_2.
	\end{cases} 
\end{equation}	
Here, $(\beta_1,\beta_2,\beta_3)$ represents the vector of the gyrostatic momentum. This system is integrable if $ \alpha_1 + \alpha_2 + \alpha_3=0$, with integrals of motion
\begin{eqnarray}
	H_1 & = & \alpha_2 z_3^2 - \alpha_3 z_2^2 - 2(\beta_1 z_1 + \beta_2 z_2 + \beta_3 z_3), \\
	H_2 & = & \alpha_3 z_1^2 - \alpha_1 z_3^2 - 2(\beta_1 z_1 + \beta_2 z_2 + \beta_3 z_3), \\
	H_3 & = & \alpha_1 z_2^2 - \alpha_2 z_1^2 - 2(\beta_1 z_1 + \beta_2 z_2 + \beta_3 z_3).
\end{eqnarray}
Only two of them are independent, due to the relation $\alpha_1 H_1 + \alpha_2 H_2 + \alpha_3 H_3 =0$, which holds true provided $\alpha_1 + \alpha_2 + \alpha_3 =0$. 

Integrability of the Kahan discretization of Zhukovski-Volterra gyrostat was studied in \cite{PPS2011}. In the present section, we give a geometric interpretation of their result concerning another case in which the system is integrable, namely $\beta_2 = \beta_3 =0$, which we denote by ZV($\beta_1$). Equations of motion simplify to
\begin{equation} \label{ZV beta1}
	\begin{cases}
			\dot{z}_1 = \alpha_1 z_2 z_3, \\
			\dot{z}_2 = \alpha_2 z_3 z_1 + \beta_1 z_3,\\
			\dot{z}_3 = \alpha_3 z_1 z_2 - \beta_1 z_2,
	\end{cases} 
\end{equation}	
and are integrable without further restrictions on parameters. More precisely, the following two functions are integrals of motion of \eqref{ZV beta1} for arbitrary values of parameters:
\begin{equation}
H_2 = \alpha_3 z_1^2 - \alpha_1 z_3^2 - 2\beta_1 z_1, \qquad H_3 = \alpha_1 z_2^2 - \alpha_2 z_1^2 - 2\beta_1 z_1.
\end{equation}
The Kahan discretization of \eqref{ZV beta1} is given by the implicit equations of motion:
\begin{equation} \label{defdZV}
\begin{cases}
			\widetilde{z}_1-z_1 = \varepsilon \alpha_1 (\widetilde{z}_2 z_3 + z_2 \widetilde{z}_3), \\
			\widetilde{z}_2-z_2 = \varepsilon \alpha_2 (\widetilde{z}_3 z_1 + z_3 \widetilde{z}_1) + \varepsilon \beta_1 (\widetilde{z}_3 + z_3), \\
			\widetilde{z}_3-z_3 = \varepsilon \alpha_3 (\widetilde{z}_1 z_2 + z_1 \widetilde{z}_2) - \varepsilon \beta_1 (\widetilde{z}_2 + z_2).
			\end{cases}
\end{equation}
This defines a birational map $\widetilde{z} = f(z;\varepsilon)$ which we will denote by dZV$(\beta_1)$. It was found in \cite{PPS2011} that this map has two conserved quantities:
\begin{equation}\label{HdZV}
\mathcal{H}_2(\varepsilon) = \dfrac{\alpha_3 z_1^2 - \alpha_1 z_3^2 - 2\beta_1 z_1+ \dfrac{\beta_1^2}{\alpha_3}}{1-\varepsilon^2 \alpha_3 \alpha_1 z_2^2}, \quad 
\mathcal{H}_3(\varepsilon) = \dfrac{\alpha_1 z_2^2 - \alpha_2 z_1^2 - 2\beta_1 z_1 - \dfrac{\beta_1^2}{ \alpha_2}}{1-\varepsilon^2 \alpha_1 \alpha_2 z_3^2}
\end{equation}
The same name will be used for the corresponding degree 3 birational map $\widetilde{x}=f(x;\varepsilon)$ on $\mbP^3$, expressed in homogeneous coordinates $x=[x_1:x_2:x_3:x_4]$.

Each of the relations $\mathcal H_3=\mu$ and $\mathcal H_2=\lambda$ defines a \emph{separable pencil of quadrics},
$$ 
\cP_\mu: \quad \Big(\alpha_1 x_2^2 - \alpha_2 x_1^2- 2\beta_1 x_1x_4 - \dfrac{\beta_1^2}{ \alpha_2}x_4^2\Big)-\mu(x_4^2 - \varepsilon^2 \alpha_1 \alpha_2 x_3^2)=X_1X_2-\mu X_3X_4=0,
$$ 
resp.
$$
\cQ_\lambda: \quad \Big(\alpha_3 x_1^2  - \alpha_1 x_3^2- 2\beta_1 x_1x_4+ \dfrac{\beta_1^2}{\alpha_3}x_4^2\Big)-\lambda(x_4^2 - \varepsilon^2 \alpha_3 \alpha_1 x_2^2)=U_1U_2-\lambda U_3U_4=0,
$$
where we can choose the corresponding linear forms as follows:
\begin{equation} \label{XUdZV}
\begin{cases}
	X_1 = \sqrt{\alpha_1} x_2 - \sqrt{\alpha_2} x_1 - \dfrac{\beta_1}{\sqrt{\alpha_2}} x_4,\\
	X_2 = \sqrt{\alpha_1} x_2 + \sqrt{\alpha_2} x_1 + \dfrac{\beta_1}{\sqrt{\alpha_2}} x_4,\\
	X_3 = x_4 - \varepsilon \sqrt{\alpha_1 \alpha_2} x_3,\\
	X_4 = x_4 + \varepsilon \sqrt{\alpha_1 \alpha_2} x_3,
	\end{cases}
	\qquad
	\begin{cases}
	U_1 =  \sqrt{\alpha_3} x_1 - \sqrt{\alpha_1} x_3- \dfrac{\beta_1}{\sqrt{\alpha_3}} x_4,\\
	U_2 = \sqrt{\alpha_3} x_1  + \sqrt{\alpha_1} x_3- \dfrac{\beta_1}{\sqrt{\alpha_3}} x_4,\\
	U_3 = x_4 - \varepsilon \sqrt{\alpha_1 \alpha_3} x_2,\\
	U_4 = x_4 + \varepsilon \sqrt{\alpha_1 \alpha_3} x_2.
	\end{cases}
\end{equation}
	
Also in the present case, both pencils of quadrics are invariant under $\sigma: X_3\leftrightarrow X_4$ which in coordinates $x$ reads as $\sigma: x_3\leftrightarrow -x_3$.
\smallskip

\begin{theo} \label{Theorem dZV(beta1)}
The linear forms \eqref{XUdZV} satisfy the conditions of Theorem \ref {geprop}. The map 
$$
f=\sigma\circ i_1=i_2\circ \sigma,
$$ 
coincides with dZV$(\beta_1)$ when expressed in coordinates $x$, where $i_1$, $i_2$ are the involutions along the generators of $\cP_\mu=\{X_1X_2-\mu X_3X_4=0\}$ defined by the intersections with the pencil $\cQ_\lambda=\{U_1U_2-\lambda U_3U_4=0\}$.
\end{theo}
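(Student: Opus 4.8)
The plan is to follow verbatim the strategy of the proof of Theorem~\ref{Theorem dET}, treating $\beta_1$ as a deformation parameter that switches on at the level of the linear forms \eqref{XUdZV} but leaves the underlying geometry of Theorem~\ref{geprop} intact.

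First I would express the linear forms $U_1,\dots,U_4$ in the coordinates $X_1,\dots,X_4$ by inverting the (invertible affine-linear) left block of \eqref{XUdZV}. The essential observation is that the forms $X_3,X_4,U_3,U_4$ are literally identical to those in the dET case \eqref{XUdET}, and that the only new feature is the shift by $\tfrac{\beta_1}{\sqrt{\alpha_2}}x_4$ in $X_1,X_2$ and by $\tfrac{\beta_1}{\sqrt{\alpha_3}}x_4$ in $U_1,U_2$. Upon inversion, this contributes to $U_1,U_2$, expressed in coordinates $X$, a single extra summand proportional to $\beta_1(X_3+X_4)$ on top of their dET expressions \eqref{U1 dET}--\eqref{U2 dET}.

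Next I would read off the base lines $L_5,\dots,L_8$ of $\cQ_\lambda$ in coordinates $X$ and verify directly that $L_2=\{X_1=X_4=0\}$, $L_4=\{X_2=X_3=0\}$, $L_5$ and $L_7$ are pairwise skew and lie on a common quadric $\{A=0\}$, thus realizing case~(b) of Theorem~\ref{geprop}. Because $L_2,L_4$ already force $A$ to be a combination of the four monomials $X_1X_2,X_1X_3,X_2X_4,X_3X_4$, imposing $L_5,L_7\subset\{A=0\}$ pins down $A$ up to scale; I expect it to be the $\beta_1$-deformation
$$
A=X_3X_4-\varepsilon^2\alpha_3 X_1X_2+c\,\beta_1\,(X_1X_3-X_2X_4)
$$
of the dET quadric \eqref{A dET}, for an explicit constant $c$ depending on $\varepsilon,\alpha_2,\alpha_3$, and the vanishing of $A$ on $L_5,L_7$ is then a routine substitution.

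Once Theorem~\ref{geprop} applies, the involution $i_1$ takes the form \eqref{i1 red}, with $\cB_0$ the quadric through $L_2,L_6,L_8$ and $\cB_2$ the quadric through $L_4,L_6,L_8$; I would compute $B_0,B_2$ explicitly (again $\beta_1$-deformations of \eqref{B dET}) and obtain $f=\sigma\circ i_1$ in the coordinates $X$ as $[X_1B_2:X_2B_0:X_4B_2:X_3B_0]$, exactly as in \eqref{dET in X}. The final and by far the most laborious step --- which I expect to be the main obstacle --- is to substitute the change of variables \eqref{XUdZV} and check that this map, rewritten in the coordinates $x$, coincides with the homogenization of dZV$(\beta_1)$ obtained by solving \eqref{defdZV}. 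Conceptually nothing new happens here: separability of both pencils guarantees the hypotheses of Theorem~\ref{geprop} and the drop to degree~$3$ follows formally; but the $\beta_1$-terms now propagate simultaneously through $A$, $B_0$, $B_2$ and the nonlinear part of the coordinate change, so matching the two rational maps coefficient by coefficient is a bookkeeping computation best delegated to a computer algebra system.
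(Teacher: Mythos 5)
Your proposal takes essentially the same route as the paper: the published proof likewise just says the computations go along the same lines as for Theorem~\ref{Theorem dET}, exhibits the quadric $A=X_3X_4-\varepsilon^2\alpha_3X_1X_2+\varepsilon^2\gamma\sqrt{\alpha_3}(X_1X_3-X_2X_4)$ with $\gamma=\beta_1(\alpha_2+\alpha_3)/(2\sqrt{\alpha_2\alpha_3})$ through $L_2,L_4,L_5,L_7$ (case (b) of Theorem~\ref{geprop}), records the corresponding $\beta_1$-deformed $B_0,B_2$, and leaves the identification with dZV$(\beta_1)$ in coordinates $x$ to a straightforward symbolic computation. Your predicted shape of $A$ as a $\beta_1$-deformation of \eqref{A dET} by a multiple of $X_1X_3-X_2X_4$ is exactly what the paper finds, so the proposal is correct and matches the paper's argument.
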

\begin{proof}
The computations go along the same lines as in the proof of Theorem \ref{Theorem dET}. Equation $\{A=0\}$ of the quadric containing the lines $L_2,L_4, L_5$ and $L_7$ reads:
\begin{equation}
A =  X_3X_4-\varepsilon^2\alpha_3X_1X_2+\varepsilon^2\gamma\sqrt{\alpha_3}(X_1X_3-X_2X_4),
\end{equation}
where
\begin{equation}
\gamma=\beta_1 \frac{\alpha_2+\alpha_3}{2\sqrt{\alpha_2\alpha_3}}.
\end{equation}
The map $f$ is of the form \eqref{dET in X} with
\begin{eqnarray}
B_2 & = & X_3^2-\varepsilon^2\alpha_3X_2^2+\varepsilon\gamma\sqrt{\alpha_3}(X_2X_4+X_1X_3+2X_2X_3), \\
B_0 & = & X_4^2-\varepsilon^2\alpha_3X_1^2-\varepsilon\gamma\sqrt{\alpha_3}(X_2X_4+X_1X_3+2X_1X_4).
\end{eqnarray}
A straightforward computation shows that in coordinates $x$ this map coincides with dZV$(\beta_1)$.
\end{proof}

\textbf{Remark.}  For the map dZV$(\beta_1)$, the structure of the indeterminacy set is as in Proposition \ref{prop indeterminacy set} (case (b)). Thus, it belongs to the class of birational maps introduced by Cayley in \cite[${\rm n}^{\rm o}$ 102--104]{Cayley}, see also \cite[Example A1]{Hudson2}.

\section{Example III: Kahan-type discretization of a \\ special Zhukovski-Volterra gyrostat with two \\ non-vanishing $\beta_k$}
\label{ss: dZV special}

We now turn to the problem of an integrable discretization of the Zhukovski-Volterra gyrostat when $\beta_3=0$, which we denote by ZV($\alpha_1,\alpha_2,\alpha_3,\beta_1,\beta_2$):
\begin{equation}\label{ZV beta 2 beta 3}
\begin{cases}
	\dot{z}_1 = \alpha_1 z_2 z_3 - \beta_2 z_3, \\
	\dot z_2 = \alpha_2 z_3 z_1 + \beta_1 z_3, \\
	\dot z_3 = \alpha_3 z_1 z_2 +\beta_2 z_1 - \beta_1 z_2.
\end{cases}
\end{equation}
One can easily check that the function $H_3=\alpha_1 z_2^2 - \alpha_2 z_1^2 - 2(\beta_1 z_1 + \beta_2 z_2)$ is an integral of motion for arbitrary values of parameters, while under the condition $\alpha_1+\alpha_2+\alpha_3=0$ the system acquires the second integral of motion $H_2=\alpha_3 z_1^2 - \alpha_1 z_3^2 - 2(\beta_1 z_1 + \beta_2 z_2)$. Thus, integrability take place under the above mentioned condition only.

The Kahan discretization of this system, denoted by dZV($\alpha_1,\alpha_2,\alpha_3,\beta_1,\beta_2$), is defined by implicit equations of motion
$$
\begin{cases}
	\widetilde{z}_1-z_1 = \varepsilon \alpha_1 (\widetilde{z}_2 z_3 + z_2 \widetilde{z}_3)- \varepsilon \beta_2 (\widetilde{z}_3 + z_3), \\
	\widetilde{z}_2-z_2 = \varepsilon \alpha_2 (\widetilde{z}_3 z_1 + z_3 \widetilde{z}_1) + \varepsilon \beta_1 (\widetilde{z}_3 + z_3), \\
	\widetilde{z}_3-z_3 = \varepsilon \alpha_3 (\widetilde{z}_1 z_2 + z_1 \widetilde{z}_2) + \varepsilon \beta_2 (\widetilde{z}_1 + z_1)- \varepsilon \beta_1 (\widetilde{z}_2 + z_2).
\end{cases}
$$
The corresponding birational map $\widetilde{z} = f(z,\varepsilon)$ has, for arbitrary values of parameters, one conserved quantity: 
\begin{equation} \label{dZV H3 gen}
\mathcal{H}_3(\varepsilon) = \dfrac{\alpha_1 z_2^2 - \alpha_2 z_1^2 - 2(\beta_1 z_1 + \beta_2 z_2) +\dfrac{\beta_2^2}{\alpha_1} -\dfrac{\beta_1^2}{\alpha_2}}{1-\varepsilon^2 \alpha_1 \alpha_2 z_3^2}.
\end{equation}
However, it does not possess the second one, even under the condition $\alpha_1+\alpha_2+\alpha_3=0$. In \cite{PPS2011}, a particular case was identified, namely $\alpha_1 = - \alpha_2 = \alpha$, for which the map $f$ admits the second integral of motion. The additional integral of dZV$(\alpha,-\alpha,0,\beta_1,\beta_2)$ is polynomial and reads:
\begin{equation} 
\mathcal{H}_2(\varepsilon) = -\alpha z_3^2 - 2(\beta_1 z_1 + \beta_2 z_2) + \varepsilon^2 \alpha (\beta_2 z_1 - \beta_1 z_2)^2.
\end{equation}
We observe that, while the pencil of quadrics in $\mbP^3$ corresponding to $\mathcal H_3(\varepsilon)=\mu$ is separable, this is not the case for the pencil of quadrics corresponding to $\mathcal H_2(\varepsilon)=\lambda$. Indeed, the latter does not contain two pairs of distinct planes, but rather one double plane at infinity $\{x_4^2=0\}$, and its base set consists of two double lines. Thus, the map dZV$(\alpha,-\alpha,0,\beta_1,\beta_2)$ apparently is not covered by our constructions.

We now present a novel one-parameter family of discretizations of the special Zhukovski-Volterra gyrostat ZV$(\alpha,-\alpha,0,\beta_1,\beta_2)$, based on the construction with two separable pencils, for which the map dZV$(\alpha,-\alpha,0,\beta_1,\beta_2)$ is a special (or, better, a limiting) case.

\begin{theo} \label{th:newsys1}
Consider the following linear forms:
\begin{equation}
\begin{cases}
X_1 = \sqrt{\alpha}(x_1 + ix_2) - \dfrac{1}{\sqrt{\alpha}}(\beta_1+i\beta_2) x_4, \\[0.3cm]
X_2 = \sqrt{\alpha}(x_1 - ix_2) - \dfrac{1}{\sqrt{\alpha}}(\beta_1-i\beta_2) x_4,\\[0.2cm]
X_3 = x_4 - i\varepsilon \alpha x_3,\\[0.2cm]
X_4 = x_4 + i \varepsilon \alpha x_3,
\end{cases}
\end{equation}
and
\begin{equation}
\left\lbrace
\begin{aligned}
U_1 =& \big(1-\varepsilon^2\delta^2(\beta_1^2+\beta_2^2)\big)x_4+\varepsilon^2 \delta^2 \alpha(\beta_1x_1+\beta_2x_2)+ 
\varepsilon^2 \delta \alpha(\beta_2x_1-\beta_1x_2),  \\[0.1cm]
U_2 =& \big(1-\varepsilon^2\delta^2(\beta_1^2+\beta_2^2)\big)x_4+\varepsilon^2 \delta^2 \alpha(\beta_1x_1+\beta_2x_2)- 
\varepsilon^2 \delta \alpha(\beta_2x_1-\beta_1x_2), \\[0.1cm]
U_3 =&\; x_4 +\varepsilon\delta\alpha x_3, \\[0.1cm]
U_4 =&\; x_4 - \varepsilon\delta\alpha x_3.
\end{aligned}
\right.
\end{equation}
These forms satisfy the conditions of Theorem \ref{geprop}. The map 
$$
f=\sigma\circ i_1=i_2\circ \sigma,
$$ 
where $i_1$, $i_2$ are the involutions along the generators of $\cP_\mu=\{X_1X_2-\mu X_3X_4=0\}$ defined by the intersections with the pencil $\cQ_\lambda=\{U_1U_2-\lambda U_3U_4=0\}$, and $\sigma$ is the involution $x_3\leftrightarrow -x_3$, or $X_3\leftrightarrow X_4$, is given in the affine chart $[z_1:z_2:z_3:1]$ of the coordinate system $x$ by the following implicit equations of motion, namely:
\begin{equation}\label{dZV spec}
\left\{ \begin{aligned}
\widetilde{z}_1-z_1 &= \varepsilon \alpha (\widetilde{z}_2 z_3 + z_2 \widetilde{z}_3)- \varepsilon \beta_2 (\widetilde{z}_3 + z_3), \\
\widetilde{z}_2-z_2 &= -\varepsilon \alpha (\widetilde{z}_3 z_1 + z_3 \widetilde{z}_1) + \varepsilon \beta_1 (\widetilde{z}_3 + z_3), \\
\widetilde{z}_3-z_3 &= 
\dfrac{ \varepsilon\beta_2 (\zti_1 + z_1) -  \varepsilon\beta_1 (\zti_2 + z_2) +\varepsilon^2\delta^2\alpha\big(\beta_1 (\zti_1 z_3-\zti_3 z_1 )
+\beta_2(\zti_2 z_3 - \zti_3 z_2)\big)}{1- \varepsilon^2\delta^2 (\beta_1^2 + \beta_2^2)}.	
\end{aligned} \right.
\end{equation}
This map admits two integrals of motion:
\begin{equation} \label{dZV spec H3}
\mathcal H_3(\varepsilon)=\dfrac{\alpha(z_1^2+z_2^2)-2 (\beta_1 z_1  + \beta_2 z_2)+(\beta_1^2+\beta_2^2)/\alpha }
{1+\varepsilon^2\alpha^2  z_3^2}
\end{equation}
and 
\begin{equation} \label{dZV spec H2}
\mathcal H_2(\varepsilon,\delta)=\frac{ \big(1-\varepsilon^2\delta^2(\beta_1^2+\beta_2^2)+\varepsilon^2 \delta^2 \alpha(\beta_1z_1+\beta_2z_2)\big)^2-
\varepsilon^4 \delta^2 \alpha^2(\beta_2z_1-\beta_1z_2)^2}{1-\varepsilon^2\delta^2\alpha^2 z_3^2}.
\end{equation}	
\end{theo}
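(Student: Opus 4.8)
The statement of Theorem~\ref{th:newsys1} has three separate assertions: first, that the given linear forms $U_i$ satisfy the geometric hypotheses of Theorem~\ref{geprop}; second, that the resulting map $f=\sigma\circ i_1$, written in the affine chart, is exactly the implicit system \eqref{dZV spec}; and third, that \eqref{dZV spec H3} and \eqref{dZV spec H2} are conserved. My plan is to treat these three parts in order, since each feeds the next. The whole computation closely parallels the proof of Theorem~\ref{Theorem dET}, so I would explicitly model the argument on that one and only flag the genuinely new ingredient (the parameter $\delta$).

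**Step 1: verifying the hypotheses of Theorem~\ref{geprop}.**
First I would express the four forms $U_i$ in the $X$-coordinates by inverting the linear relations defining $X_1,\dots,X_4$ in terms of $x_1,\dots,x_4$, exactly as equations \eqref{U1 dET}--\eqref{U4 dET} do for dET. From these I read off the base lines $L_5,\dots,L_8$ of the pencil $\cQ_\lambda$ in $X$-coordinates. The crucial check is then purely linear-algebraic: I must exhibit a single quadric $\{A=0\}$ containing the four lines $L_2=\{X_1=X_4=0\}$, $L_4=\{X_2=X_3=0\}$, $L_5$, $L_7$, and verify that these four lines are pairwise skew, so that case (b) of Theorem~\ref{geprop} applies. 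Concretely, I would make a general ansatz $A=\sum_{i\le j}c_{ij}X_iX_j$, impose vanishing on each of the four lines, and solve for the $c_{ij}$; the resulting $A$ should turn out to be a deformation of \eqref{A dET} by $\delta$- and $\beta$-dependent terms. Once $A$ is found, Theorem~\ref{geprop} immediately guarantees that $i_1$ has the determinantal form \eqref{i1 red} and that the cofactors $B_0,B_2$ are the quadrics through the appropriate triples of lines; I would extract $B_0,B_2$ either from that geometric characterization or directly from the factorization $T_0=AB_0$, $T_2=AB_2$ using \eqref{defT0}--\eqref{defT2}.

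**Step 2: identifying the map with the implicit system \eqref{dZV spec}.**
With $i_1$ given by \eqref{i1 red} and $f=\sigma\circ i_1$ explicit in $X$-coordinates (of the shape \eqref{dET in X}), I would substitute the change of variables back to the $x$-coordinates and pass to the affine chart $x_4=1$. The claim is that the resulting rational map $\widetilde z=f(z)$ satisfies the bilinear relations \eqref{dZV spec}. Rather than computing $f$ explicitly and then checking the relations, the cleaner route is to verify \eqref{dZV spec} directly: substitute the components of $f$ into each of the three bilinear equations and confirm that both sides agree identically as rational functions of $z$. This is the step where I expect the bulk of the symbolic labor to sit, and in practice it is a computer-algebra verification; in the write-up I would state that it is a straightforward (if lengthy) computation, in the same spirit as the closing sentence of the proof of Theorem~\ref{Theorem dET}.

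**Step 3: the conserved quantities, and the main obstacle.**
Conservation of $\mathcal H_3$ in \eqref{dZV spec H3} follows from the general framework: by construction $f$ preserves every quadric of the pencil $\cP_\mu$, and $\mathcal H_3=\mu$ is precisely the pencil parameter, so $\mathcal H_3\circ f=\mathcal H_3$ is essentially automatic once Step~1 identifies $\cP_\mu$ with the level sets of \eqref{dZV spec H3}; I would verify the correspondence of the linear forms $X_i$ with \eqref{dZV spec H3} to make this rigorous. The genuinely delicate point is $\mathcal H_2$ in \eqref{dZV spec H2}: this is the invariant attached to the \emph{second} pencil $\cQ_\lambda$, and I expect the main obstacle to be showing that $\mathcal H_2=\lambda$ really is invariant under $f=\sigma\circ i_1$ rather than merely under $i_1$, since $\sigma$ acts on the $\cQ_\lambda$-structure. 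Because both pencils are $\sigma$-symmetric (as noted before the theorem), $\sigma$ should permute the fibers $\cQ_\lambda$ trivially, so that $i_1$, $i_2$ and hence $f$ all preserve $\lambda$; I would spell this out and then confirm that $\lambda$, written in $x$-coordinates, reproduces \eqref{dZV spec H2} with its characteristic $\delta$-dependence. The final sanity check is to let $\delta\to 0$ (or the relevant limit) and recover the known map dZV$(\alpha,-\alpha,0,\beta_1,\beta_2)$, confirming that the family genuinely deforms the earlier example.
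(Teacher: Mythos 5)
Your proposal is correct and follows essentially the same route as the paper, whose proof consists of the remark that everything is a straightforward computation along the lines of the proof of Theorem \ref{Theorem dET} together with the observation that the two integrals are just the pencil ratios $X_1X_2/(X_3X_4)$ and $U_1U_2/(U_3U_4)$. Your three steps (verify the hypotheses of Theorem \ref{geprop} by exhibiting the quadric $\{A=0\}$, convert the determinantal form \eqref{i1 red} back to $x$-coordinates, and read off the integrals from the pencil parameters, using the $\sigma$-symmetry of both pencils) are exactly the details the paper leaves implicit.
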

\begin{proof}
This is a straightforward computation along the same lines as the proof of Theorem \ref{Theorem dET}. The integrals of motion are just 
$$
\mathcal H_3(\varepsilon)= \dfrac{X_1 X_2}{X_3 X_4} \quad \text{and} \quad \mathcal H_2(\varepsilon,\delta)=\dfrac{U_1 U_2}{U_3 U_4}. 
 $$ 
\end{proof}

\textbf{Remark.} We notice that \eqref{dZV spec} is not a Kahan discretization of ZV$(\alpha,-\alpha,0,\beta_1,\beta_2)$ in the strict sense, because of the presence of skew-symmetric bilinear expressions $\zti_1 z_3-\zti_3 z_1$ and $\zti_2 z_3-\zti_3 z_2$ on the right-hand side of the third equation of motion. However, these terms do not contribute towards the continuous limit $\varepsilon\to 0$, so that for any $\delta$ we get an integrable discretization of ZV$(\alpha,-\alpha,0,\beta_1,\beta_2)$. We can speak in this case of an \emph{adjusted} Kahan discretization, in the sense of \cite{PSZ2020}, \cite{SST2021}. In the limit $\delta\to 0$, we recover the map dZV$(\alpha,-\alpha,0,\beta_1,\beta_2)$. The second integral of the latter map is recovered in this limit, as well, due to
$$
\mathcal H_2(\varepsilon,\delta)=1-\varepsilon^2\delta^2\alpha \mathcal H_2(\epsilon) +O(\delta^4).
$$
On the other hand, if $\delta^2=-1$, so that the integrals $\mathcal H_3(\varepsilon)$ and $\mathcal H_2(\varepsilon,\delta)$ share the common denominator, then their linear combination leads to a simpler version of the second integral, namely
\begin{equation} \label{dZV spec H2 spec}
\mathcal H_2(\varepsilon)=\dfrac{-\alpha z_3^2-2 (\beta_1 z_1+\beta_2 z_2)+2(\beta_1^2+\beta_2^2)/\alpha} 
{1+\varepsilon^2\alpha^2  z_3^2}.
\end{equation}

\section{Example IV: Kahan-type discretization of a \\
general Zhukovski-Volterra gyrostat with two \\ non-vanishing $\beta_k$}
\label{ss: dZV general}

Here, we give an application of the construction of Section \ref{section: construction 2}.

\begin{theo}
Define the following linear forms:
\begin{equation} \label{dZV gen X}
\begin{cases}
X_1 = \sqrt{\alpha_1}x_2 -\sqrt{\alpha_2}x_1 
-\Big(\dfrac{\beta_1}{\sqrt{\alpha_2}}+\dfrac{\beta_2}{\sqrt{\alpha_1}}\Big) x_4, \\[0.3cm]
X_2 = \sqrt{\alpha_1}x_2 +\sqrt{\alpha_2}x_1 
+\Big(\dfrac{\beta_1}{\sqrt{\alpha_2}}-\dfrac{\beta_2}{\sqrt{\alpha_1}}\Big) x_4,\\[0.2cm]
X_3 = x_4 - \varepsilon \sqrt{\alpha_1\alpha_2} x_3,\\[0.2cm]
X_4 = x_4 +  \varepsilon\sqrt{\alpha_1\alpha_2} x_3.
\end{cases}
\end{equation}
Set
\begin{equation}\label{dZV gen Qinfty}
Q_\infty(X)=\alpha_3 x_1^2 - \alpha_1 x_3^2 - 2(\beta_1 x_1 + \beta_2 x_2)x_4+\gamma x_4^2,
\end{equation}
where
\begin{equation}\label{dZV gen gamma}
\gamma=\frac{\beta_2^2}{\alpha_1}-\frac{\beta_1^2}{\alpha_2}
\end{equation}
(expressed in the variables $X$). Then the map 
$$
f=\sigma\circ i_1=i_2\circ \sigma,
$$ 
where $i_1$, $i_2$ are the involutions along the generators of $\cP_\mu=\{X_1X_2-\mu X_3X_4=0\}$ defined by the intersections with the pencil $\cQ_\lambda=\{X_3X_4-\lambda Q_\infty(X)=0\}$, and $\sigma$ is the involution $x_3\leftrightarrow -x_3$, or $X_3\leftrightarrow X_4$, is given in the coordinates $X$ by \eqref{i1 red}, where
\begin{equation}
B_2=Q_\infty(0,X_2,X_3,0),\quad 
B_0=Q_\infty(X_1,0,0,X_4).
\end{equation}
In the affine chart $[z_1:z_2:z_3:1]$ of the coordinate system $x$, the map $f$ is given by the following implicit equations of motion:
\begin{equation}\label{dZV gen}
\left\{ \begin{aligned}
\widetilde{z}_1-z_1 = & \;\varepsilon \alpha_1 (\widetilde{z}_2 z_3 + z_2 \widetilde{z}_3)- \varepsilon \beta_2 (\widetilde{z}_3 + z_3), \\
\widetilde{z}_2-z_2 = & \; \varepsilon \alpha_2 (\widetilde{z}_3 z_1 + z_3 \widetilde{z}_1) + \varepsilon \beta_1 (\widetilde{z}_3 + z_3), \\
\widetilde{z}_3-z_3 = & \; \varepsilon \alpha_3 (\widetilde{z}_1z_2+z_1\widetilde{z}_2)-\varepsilon\beta_2\dfrac{\alpha_2+\alpha_3}{\alpha_1}(\widetilde{z}_1+z_1)-\varepsilon\beta_1(\widetilde z_2+ z_2)\\
& -\varepsilon^2\beta_1(\alpha_2+\alpha_3) (z_1\widetilde{z}_3-\widetilde{z}_1z_3)-\varepsilon^2\beta_2\alpha_2(z_2\widetilde{z}_3-\widetilde{z}_2z_3).
\end{aligned} \right.
\end{equation}
This map possesses two integrals of motion, $\mathcal H_3(\varepsilon)$ given in \eqref{dZV H3 gen} and
\begin{equation}\label{dZV gen H2}
\mathcal H_2(\varepsilon)=\dfrac{\alpha_3 z_1^2 - \alpha_1 z_3^2 - 2(\beta_1 z_1 + \beta_2 z_2)+\gamma}{1-\varepsilon^2\alpha_1\alpha_2z_3^2}.
\end{equation}
\end{theo}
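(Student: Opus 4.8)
The plan is to read this statement off as a direct application of Theorem \ref{Theorem construction 2} and then to carry out two translations: from the $X$-coordinates to the affine chart $z$ for the map itself, and from the pencil parameters for the two integrals. First I would identify the geometric data. The second pencil $\cQ_\lambda=\{X_3X_4-\lambda Q_\infty(X)=0\}$ has $Q_0=X_3X_4$, which is precisely the defining form of the reducible quadric $\cP_\infty=\{X_3X_4=0\}$ of the separable pencil $\cP_\mu=\{X_1X_2-\mu X_3X_4=0\}$; equivalently, $\cP_\infty=\cQ_0$ is a quadric common to both pencils. This is the situation of Section \ref{section: construction 2}, with the shared quadric taken to be the reducible member $\cP_\infty$ rather than a finite $\cP_{\mu_0}$. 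Although Theorem \ref{Theorem construction 2} is stated for $Q_0=X_1X_2-\mu_0X_3X_4$, its proof invokes only the identities $Q_0(0,X_2,X_3,0)=0$ and $Q_0(X_1,0,0,X_4)=0$, and these hold for $Q_0=X_3X_4$ as well. Hence the factorization \eqref{T0 T2 fact} follows from \eqref{defT0}, \eqref{defT2} with $A=-X_3X_4$, $B_2=Q_\infty(0,X_2,X_3,0)$, and $B_0=Q_\infty(X_1,0,0,X_4)$, so that $i_1$ is given by \eqref{i1 red} and has degree $3$; since $B_0$ and $B_2$ each involve only two of the coordinates, each is a product of two linear forms.

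Next I would assemble $f=\sigma\circ i_1$. For this to be the QRT root, both pencils must be invariant under $\sigma:X_3\leftrightarrow X_4$. The first pencil clearly is. For the second, I would read off from \eqref{dZV gen X} that the substitution $x_3\mapsto-x_3$ interchanges $X_3\leftrightarrow X_4$ and fixes $X_1,X_2$; since $Q_\infty$ contains $x_3$ only through $x_3^2$, it is invariant under $x_3\mapsto-x_3$ and therefore symmetric in $X_3\leftrightarrow X_4$. Consequently $f=\sigma\circ i_1$ is the degree-$3$ map of the same shape as in Theorem \ref{Theorem dET}, namely $[X_1:X_2:X_3:X_4]\mapsto[X_1B_2:X_2B_0:X_4B_2:X_3B_0]$, cf.\ \eqref{dET in X}.

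The most laborious step is to pass from this formula in $X$ to the implicit equations \eqref{dZV gen} in the chart $[z_1:z_2:z_3:1]$. Here I would express $Q_\infty$ in the $X$-variables by inverting \eqref{dZV gen X}, compute $B_2=Q_\infty(0,X_2,X_3,0)$ and $B_0=Q_\infty(X_1,0,0,X_4)$ explicitly, substitute into the map above, and convert back to the $x$- and then $z$-variables. Rather than expand the explicit rational map, I would verify \eqref{dZV gen} by inserting the image components into each bilinear relation and checking that the resulting identity reproduces $f$ written in the $X$-coordinates. I expect this bookkeeping to be the main obstacle: the substitution \eqref{dZV gen X} couples all four variables, and the delicate point is to recover correctly the skew-symmetric correction terms $z_1\widetilde{z}_3-\widetilde{z}_1z_3$ and $z_2\widetilde{z}_3-\widetilde{z}_2z_3$ in the third equation, which are exactly what make \eqref{dZV gen} an adjusted, rather than a strict, Kahan discretization.

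Finally, the two integrals are forced by the geometry: since $f$ acts as a 3D QRT map, each of $i_1$, $\sigma$, and hence $f$, leaves every quadric of both pencils invariant, so the two pencil parameters $\mu=X_1X_2/(X_3X_4)$ and $\lambda=X_3X_4/Q_\infty$ are conserved. In the chart $x_4=1$ one has $X_3X_4=1-\varepsilon^2\alpha_1\alpha_2z_3^2$, while $X_1X_2$ and $Q_\infty$ reduce to the numerators of \eqref{dZV H3 gen} and \eqref{dZV gen H2} respectively; thus $\mathcal H_3=\mu$ and $\mathcal H_2=1/\lambda$ recover the two claimed integrals of motion.
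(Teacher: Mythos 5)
Your proposal is correct and follows essentially the same route as the paper: reduce the $X$-coordinate statement to Theorem \ref{Theorem construction 2} (your remark that its proof only uses $Q_0(0,X_2,X_3,0)=Q_0(X_1,0,0,X_4)=0$, so it applies to the common quadric $Q_0=X_3X_4$ even though this is the ``$\mu_0=\infty$'' member of $\cP_\mu$, is a point the paper glosses over), then pass to the chart $[z_1:z_2:z_3:1]$ by a symbolic computation organized through the equivalent system of bilinear relations, and read off the two integrals as the conserved pencil parameters $X_1X_2/(X_3X_4)$ and $Q_\infty/(X_3X_4)$. The only cosmetic difference is that the paper writes down the three bilinear relations explicitly in the $X$-coordinates and transforms them, whereas you propose to verify the given relations \eqref{dZV gen} directly; both amount to the same computation.
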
 
\begin{proof}
The statement in coordinates $X$ follows from Theorem \ref{Theorem construction 2}. The result in coordinates $x$ follows by a direct symbolic computation. This computation is facilitated by a formulation of equations of motion in coordinates $X$ in a bilinear form. Let 
$$
Q_\infty=a_{11}X_1^2+a_{12}X_1X_2+a_{22}X_2^2+a_{13}X_1(X_3+X_4)+a_{23}X_2(X_3+X_4)+a_{33}(X_3^2+X_4^2)+a_{34}X_3X_4
$$
be a quadratic homogeneous polynomial symmetric w.r.t. $X_3\leftrightarrow X_4$, so that
$$
B_2=a_{22}X_2^2+a_{23}X_2X_3+a_{33}X_3^2, \qquad B_0=a_{11}X_1^2+a_{13}X_1X_4+a_{33}X_4^2.
$$
Then the relations
$$
[\widetilde X_1:\widetilde X_2:\widetilde X_3:\widetilde X_4]=[X_1B_2: X_2B_0:X_4B_2:X_3B_0]
$$
are equivalent to the system of bilinear relations between $X,\widetilde X$ of which three linearly independent ones can be chosen as follows:
$$
\widetilde X_1X_4=\widetilde X_3X_1, \qquad \widetilde X_2X_3=\widetilde X_4X_2, 
$$
$$
a_{11}\widetilde X_1X_1-a_{22}\widetilde X_2X_2+a_{13}\widetilde X_3X_1-a_{23}\widetilde X_4X_2+a_{33}(\widetilde X_3X_4-\widetilde X_4X_3)=0.
$$
Performing a linear change of variables according to \eqref{dZV gen X}, one finds three linearly independent bilinear relations between $x, \widetilde x$, which turn into \eqref{dZV gen} upon setting $z_i=x_i/x_4$ and $\widetilde z_i=\widetilde x_i/\widetilde x_4$.
\end{proof}

The map \eqref{dZV gen} is an ``adjusted'' Kahan-type discretization of the following system of differential equations:
\begin{equation}\label{ZV Nambu}
\left\{ \begin{aligned}
\dot{z}_1= & \; \alpha_1 z_2 z_3 -\beta_2 z_3, \\
\dot z_2 = & \; \alpha_2 z_3 z_1 + \beta_1 z_3, \\
\dot z_3 = & \; \alpha_3 z_1z_2-\beta_2\dfrac{\alpha_2+\alpha_3}{\alpha_1}  z_1-\beta_1z_2.
\end{aligned} \right.
\end{equation}
This system admits two conserved quantities $H_3=\alpha_1 z_2^2 - \alpha_2 z_1^2 - 2(\beta_1 z_1 + \beta_2 z_2)$ and $H_2=\alpha_3 z_1^2 - \alpha_1 z_3^2 - 2(\beta_1 z_1 + \beta_2 z_2)$ without any restrictions on parameters. Under condition $\alpha_1+\alpha_2+\alpha_3=0$, it turns into ZV$(\alpha_1,\alpha_2,\alpha_3,\beta_1,\beta_2)$, and \eqref{dZV gen}  turns into an integrable Kahan-type discretization of the latter system. If $\alpha_1=-\alpha_2=\alpha$  and $\alpha_3=0$, we recover the system ZV$(\alpha,-\alpha,0,\beta_1,\beta_2)$. If we choose in \eqref{dZV gen Qinfty} the value
\begin{equation}\label{dZV gen 2 gamma}
\gamma=2\left(\frac{\beta_2^2}{\alpha_1}-\frac{\beta_1^2}{\alpha_2}\right)
\end{equation}
instead of \eqref{dZV gen gamma}, we recover the discretization \eqref{dZV spec} of ZV$(\alpha,-\alpha,0,\beta_1,\beta_2)$ with $\delta^2=-1$ (note that the integral \eqref{dZV gen H2} with $\gamma$ from \eqref{dZV gen 2 gamma} coincides with the integral \eqref{dZV spec H2 spec}, if $\alpha_1=-\alpha_2=\alpha$ and $\alpha_3=0$).

\textbf{Remark.} System \eqref{ZV Nambu} can be interpreted as the Nambu system \cite{Nambu}
$$
\dot{z}=\frac{1}{4\alpha_1}\,\nabla H_2\times \nabla H_3.
$$ 
Some results on integrability of the Kahan discretization for Nambu systems were found in \cite{Celledoni2}, \cite{Celledoni3}. More precisely, in \cite{Celledoni2} integrability of the Kahan discretization was established for the case when both Nambu Hamiltonians are homogeneous quadratic polynomials on $\mbR^3$ (a typical example is given by dET). In \cite{Celledoni3}, for the case when both Nambu Hamiltonians are possibly inhomogeneous polynomials of degree 2 on $\mbR^3$, but each of them depends only on two of the three variables (a typical example being dZV$(\beta_1)$). Neither of these results covers our present case, where an adjustment of the Kahan discretization by means of nontrivial skew-symmetric bilinear forms of $z$, $\widetilde z$ is required.

\section{Conclusion}

In the present paper, we propose a geometric construction of three-dimensional birational maps preserving two pencils of quadrics. Moreover, we identify geometric conditions under which these maps are of bidegree (3,3). The examples of the latter include: 
\begin{itemize}
\item previously known Kahan discretizations of the Euler top and of the Zhukovski-Volterra gyrostat with one non-vanishing component of the gyrostatic momentum, 
\item a novel Kahan-type discretizations for the case of the Zhukovski-Volterra gyrostat with two non-vanishing components of the gyrostatic momentum, for which the usual Kahan discretization is non-integrable.
\end{itemize}
We expect that relaxing some of the restrictive geometric conditions will lead to an integrable Kahan-type discretization of general Nambu systems in $\mbR^3$ with quadratic Hamiltonians.

It can be anticipated that further research in this direction will lead to the discovery of a number of novel beautiful geometric constructions of integrable maps in dimension three and higher, related to addition laws on elliptic rational surfaces and on more complicated Abelian varieties. This will mark a further progress in the theory of integrable systems, under the general motto ``Geometry rules!''

\subparagraph*{Acknowledgement}
This research is supported by the DFG Collaborative Research Center TRR 109 ``Discretization in
Geometry and Dynamics''.

\subparagraph*{Figures} All figures have been made with \emph{Mathematica}.

\subparagraph*{Data availability} Data sharing is not applicable to this article as no data sets were generated or analyzed during the current study.

\subparagraph*{Conflicts of interest} The authors have no relevant financial or non-financial interests to disclose.

\bibliographystyle{acm}
\bibliography{Refs-2022}
\end{document}